	\theoremstyle{plain}
	\theoremstyle{plain}
	\newtheorem{corr}{Corollary}
	\newtheorem{prop}{Proposition}
\begin{document}
	\title{\color{black}On the Distribution of the Ratio of Products of Fisher-Snedecor $\mathcal{F}$ Random Variables and Its Applications}

	\author{Hongyang~Du, Jiayi~Zhang, Kostas~P.~Peppas,~\IEEEmembership{Senior~Member,~IEEE}, Hui Zhao, \\Bo Ai,~\IEEEmembership{Senior~Member,~IEEE}, and Xiaodan Zhang
	\thanks{H. Du and J.~Zhang are with the School of Electronic and Information Engineering, Beijing Jiaotong University, Beijing 100044, P. R. China. (e-mail: \{17211140; jiayizhang \}@bjtu.edu.cn)}
	\thanks{K. P. Peppas is with the Department of Informatics and Telecommunications, University of Peloponnese, 22131 Tripoli, Greece (e-mail: peppas@uop.gr).}
	\thanks{H. Zhao is with the Communication Systems Department, EURECOM, Sophia Antipolis 06410, France (e-mail: hui.zhao@kaust.edu.sa).}
	\thanks{B. Ai is with the  State Key Laboratory of Rail Traffic Control and Safety, Beijing Jiaotong University, Beijing 100044, China (e-mail: boai@bjtu.edu.cn).}
\thanks{X. Zhang is with School of Management, Shenzhen Institute of Information Technology, Shenzhen 518172, China. (e-mail: zhangxd@sziit.edu.cn).}
	}

	\maketitle
	\begin{abstract}
	The Fisher-Snedecor $\mathcal{F}$ distribution has been recently proposed as a more accurate and mathematically tractable composite fading model than traditional established models in some practical cases. In this paper, we firstly derive exact closed-form expressions for the main statistical characterizations of the ratio of products of $\mathcal{F}$-distributed random variables, including the probability density function, the cumulative distribution function and the moment generating function. Secondly, simple and tight  approximations to the distribution of products and ratio of products of $\mathcal{F}$-distributed random variables are presented. These analytical results can be readily employed to evaluate the performance of several emerging system configurations, including full-duplex relaying systems operating in the presence of co-channel interference and wireless communication systems enhanced with physical-layer security. The proposed mathematical analysis is substantiated by numerically evaluated results, accompanied by equivalent ones obtained using Monte Carlo simulations.
	\end{abstract}
	\begin{IEEEkeywords}
	Fisher-Snedecor $\mathcal{F}$ distribution, performance analysis, physical layer security, full-duplex relaying networks.
	\end{IEEEkeywords}
	\IEEEpeerreviewmaketitle
	\section{Introduction}
	The performance of wireless communication systems is hampered by several factors, including multipath fading and shadowing. As such, accurate channel modeling of such phenomena is crucial for purposes of accurate performance analysis and design of wireless communications \cite{wong2017key,zhang2019multiple}. In order to describe the statistics of the joint effect of multipath fading and simultaneous shadowing, various composite multipath/shadowing fading distributions have been proposed in the past decades, including the $\mathcal{K}$ \cite{abdi1998k} and the generalized-$\mathcal{K}$ distributions \cite{shankar2004error}.
	
	Recently, several generalized composite distributions have been presented in the open technical literatures which can better model the statistics of the mobile radio signal, such as the gamma shadowed Rician, $\kappa$-$\mu$, $\eta$-$\mu$, $\alpha$-$\mu$ and $\alpha$-$\kappa$-$\mu$ distributions \cite{abdi2003new,zhang2012performance,paris2014statistical,zhang2015effective,zhang2016multivariate,al2017unified,zhang2016high,ramirez2019alpha,zhang2019mixed}. Although these composite models can adequately characterize the incurred mobile signal fading phenomena, their mathematical representation is rather cumbersome or even intractable. 
	The Fisher-Snedecor $\mathcal{F}$ distribution has been recently introduced as a mathematically tractable fading model that well describes the combined effects of multipath fading and shadowing, especially in the deep fading case \cite{7886273}. This distribution can be reduced to some common fading models in some special parameter settings, such as Nakagami-$m$ and Rayleigh fading channels. Furthermore, the $\mathcal{F}$ distribution can provide a better fit to experimental data obtained at device-to-device (D2D) communications with respect to the well established generalized-$K$ distribution
	with lower computational complexity \cite{7886273}. Because of its interesting properties, the performance analysis of digital wireless communication
	systems over $\mathcal{F}$ distributed channels has been analyzed in several recent research works, e.g. see \cite{chen2018effective,badarneh2018sum,almehmadi2018effective,kong2018physical,kapucu2019analysis,zhao2019ergodic,yoo2019comprehensive,yoo2019entropy}
	and references therein.
	
	On the other hand, for the performance analysis in many practical wireless applications, the statistics of the ratios and products of fading RVs is of significant importance. For example, the performance analysis of wireless communications systems operating in the presence of co-channel interference commonly involves the evaluation of the statistics of the ratio of signals' powers, {\color{black}i.e.,} the signal-to-interference ratio (SIR). Moreover, the distribution of products of random variables is useful for the performance evaluation of multi-antenna systems operating in the presence of keyholes or relaying systems with non-regenerative relays \cite{J:PeppasCascaded}.
	
	{\color{black}The statistics of the ratio and products of fading random variables (RVs) have been extensively studied in several past research works. For example, a very generic analytical framework for the evaluation of the statistics of products and ratios of arbitrarily distributed RVs using the so-called $H$-function technique has first been proposed in \cite{CarterSpirnger}. The $H$-function distribution is a very generic statistical model that includes as special cases several well-known distributions, such as the gamma, the exponential, the Weibull and the generalized gamma ($\alpha$-$\mu$) distribution.
Several results on the distribution of the ratio of gamma, exponential, Weibull, and normal RVs have been derived in \cite{ahsen2016ratio,annavajjala2010ratio,nadarajah2006product,pham2006density}.
In a recent work \cite{8589124}, the so-called $N*$ Fisher-Snedecor $\mathcal{F}$ distribution, obtained as the products of $N$ statistically independent but not necessarily identically distributed (i.n.i.d.) $\mathcal{F}$-distributed RVs, has been introduced.
The statistical properties of the ratio of products of $\alpha$-$\mu$ RVs have been investigated in \cite{leonardo2016ratio} and \cite{matovic2013distribution}. In \cite{da2017product}, the distribution of the product of two i.n.i.d. distributed $\alpha$-$\mu$, $\kappa$-$\mu$, and $\eta$-$\mu$ RVs has been investigated}.

{\color{black} A major drawback of the above cited works, however, is that all corresponding analytical results are expressed in terms of Fox's $H$- or Meijer's $G$-functions, which are, in general, difficult to be evaluated numerically. In order to address this problem, approximate yet accurate closed-form expressions to the distribution of products and ratios have been proposed in several past research works. For example, \cite{lu2011accurate,chen2011novel}, highly accurate approximations to the distribution of products of generalized gamma and generalized normal RVs have been presented by employing a Mellin transform-based technique.}

{\color{black}On the other hand, the central limit theorem (CLT) can be efficiently used to approximate the products of RVs with a log-normal distribution. Using the CLT-based technique, in \cite{zheng2012approximation}, a log-normal approximation to the distribution of the products of Nakagami-$m$ RVs has been proposed. Additional results on the use of the log-normal distribution to the approximation of the products of Nakagami-$m$ RVs (i.e. the so-called $N$ * Nakagami fading distribution) have been obtained in \cite{Karagiannidis2015N}. The results presented in that work also revealed the connection between the log-normal distribution and the distribution of the products of RVs.}

	Motivated by the facts outlined above, in this paper we first derive closed-form expressions for the main statistics of the ratio of products of i.n.i.d. squared Fisher-Snedecor $\mathcal{F}$ RVs, including the probability density function (PDF), the cumulative distribution function (CDF) and the moment generating function (MGF).
	Then, accurate log-normal approximations to the distribution of products, ratios and ratios of products of squared $\mathcal{F}$ RVs are presented. The parameters of the log-normal distribution have been obtained utilizing the classical moment matching theory.
	In order to further highlight the usefulness of the proposed analysis, two important system configurations are assessed, namely a secure wireless communication link and a full-duplex relaying system employing the decode-and-forward (DF) protocol with co-channel interference. Extensive numerically evaluated results accompanied with Monte-Carlo simulations are further presented to validate the proposed analysis.
	
	The rest of the paper is organized as follows. In Section \ref{se:Preliminaries}, an overview of the statistical properties of the Fisher-Snedecor $\mathcal{F}$ distribution is presented. In Section \ref{Sec:Ratios}, closed-form expressions for the PDF, the CDF and the MGF of the ratio of products of squared $\mathcal{F}$-distributed RVs are derived in closed-form.
	Section \ref{Sec:APPROXIMATION} presents the proposed log-normal approximations to the ratios of products of squared $\mathcal{F}$-distributed RVs.
	In Section \ref{Sec:Applications}, the application of the proposed analysis to physical layer security and full-duplex relaying with co-channel interference is presented. Numerical and computer simulation results are presented in Section \ref{Sec:Results}, followed by Section \ref{Sec:Conclusions} concluding the paper.
	
	\emph{Notations}: $j = \sqrt { - 1} $, ${\rm{Res}}\left[ {f\left( z \right), z_k} \right]$ is the residue of the complex function $f\left( z \right)$ evaluated at $z_k$, 
	${\mathbb{E}}\left[ \cdot \right]$ denotes the expectation operator,
	$f_X\left( \cdot \right) $ denotes the PDF of the random variable (RV) $X$,
	$F_X\left( \cdot \right)$ is the CDF of $X$,
	$\mathcal{M}_X \left( \cdot \right) $ is the MGF of the RV $X$, $\Gamma \left( \cdot \right) $ is the Gamma function \cite[eq. (8.310/1)]{gradshteyn2007}, 
	${B\left( { \cdot , \cdot } \right)}$ denotes the Beta function \cite[eq. (8.384.1)]{gradshteyn2007},
	$H(\cdot)$ is the unit step function,
	${}_2{F_1}\left( { \cdot , \cdot ; \cdot ; \cdot } \right)$ denotes the Gauss hypergeometric function \cite[eq. (9.111)]{gradshteyn2007}, 
	$G \, \substack{ m , n \\ p , q}(\cdot)$ is the Meijer's G-function \cite[eq. (9.301)]{gradshteyn2007}, {\color{black}$\left\{ {{\Delta _i}} \right\}_{i = 1}^L \triangleq \{ {\Delta _1},{\Delta _2}, \cdots ,{\Delta _L}\}$ and $1 - \left\{ {{\Delta _i}} \right\}_{i = 1}^L \triangleq \{1 - {\Delta _1},1 - {\Delta _2}, \cdots ,1 - {\Delta _L} \}$.}
	
	\section{Preliminaries}\label{se:Preliminaries}
	The Fisher-Snedecor $\mathcal{F}$ distribution is a composite fading model where the received signal's small-scale variations follow a Nakagami-$m$ distribution whereas its root mean square power follows an inverse Nakagami-$m$ distribution.
	%
	A squared $\mathcal{F}$-distributed RV, $\gamma \sim \mathcal{F}\left( \bar{\gamma };m;{{m}_{s}} \right)$,
	can be mathematically obtained as the ratio of two gamma distributed RVs, $X_1$ and $X_2$, having PDFs given by
	\begin{equation}\label{Eq:PDFIx}
	f_{X_\ell}(x) = \frac{a_\ell^{a_\ell} x^{a_\ell-1}}{{b_\ell}^{b_\ell}\Gamma(a_\ell)}\exp{\left(-\frac{a_\ell x}{b_\ell}\right)},\, \forall \ell \in \{1, 2\}
	\end{equation}
	where $a_1 = m$ is the fading parameter, $a_2 = m_s$ is the shadowing parameter,  $ b_1 = \bar{\gamma }$ is the average SNR, and
	$ b_2 = m_s/(m_s-1)$.
	The PDF of $\gamma$ is given as \cite[eq. (6)]{yoo2019comprehensive} 
	\begin{equation}\label{1}
	{f_{_\gamma }}\left( \gamma  \right) = \frac{{{m^m}{{\left( {{m_s} - 1} \right)}^{{m_s}}}{{\bar \gamma }^{{m_s}}}{\gamma ^{m - 1}}}}{{B\left( {m,{m_s}} \right){{\left( {m\gamma  + \left( {{m_s} - 1} \right)\bar \gamma } \right)}^{m + {m_s}}}}},
	\end{equation}
	{\color{black}where $m>1/2$ and $m_s>1$.}
	Note that for $ m_{s}\to 0$, heavy shadowing is attained whereas shadowing vanishes as $m_s \to \infty$ ({\color{black}i.e.,} only Nakagami-$m$ small-scale fading).
	
	Finally, {\color{black}the $n_{\rm th}$ moment} of the Fisher-Snedecor $\mathcal{F}$ distribution can be	derived in closed-form as \cite[eq. (9)]{yoo2019comprehensive}
	\begin{equation}\label{nth1}
		\mathbb{E}\left[ {{\gamma ^n}} \right] = {\left( {\frac{{\left( {{m_s} - 1} \right)\bar \gamma }}{m}} \right)^n}\frac{{B\left( {m + n,{m_s} - n} \right)}}{{B\left( {m,{m_s}} \right)}}.
	\end{equation}
	
	\section{Ratio of Squared $\mathcal{F}$-distributed RVs}\label{Sec:Ratios}	
	 In this section, we derive closed-form expressions for the PDF, the CDF and the MGF of the ratio of products of independent squared $\mathcal{F}$-distributed RVs.
{\color{black}	Let us define $Z \buildrel \Delta \over = \prod\limits_{{\ell _1} = 1}^{{L_{1}}} {{\gamma _{1,{\ell _1}}}} /\prod\limits_{{\ell _2} = 1}^{{L_{2}}} {{\gamma _{2,{\ell _2}}}} $
	where ${\gamma _{1,{\ell _1}}} \sim {\cal F}\left( {{{\bar \gamma }_{1,{\ell _1}}},{m_{1,{\ell _1}}},{m_{1,{s_{{\ell _1}}}}}} \right)$ $(\ell_1=1, \cdots, L_1)$ and ${\gamma _{2,{\ell _2}}} \sim {\cal F}\left( {{{\bar \gamma }_{2,{\ell _2}}},{m_{2,{\ell _2}}},{m_{2,{s_{{\ell _2}}}}}} \right)$ $(\ell_2=1, \cdots, L_2)$ are i.n.i.d.
	$\mathcal{F}$-distributed RVs.} The following result holds.
	\begin{prop}\label{C}
	The PDF, the CDF and the MGF of $Z$ can be deduced in closed-forms as
	\begin{subequations}
	\begin{align}\label{8}
	&{{f}_{_{Z}}}\left( z \right)=\frac{1}{z{{B}_{1}}{{B}_{2}}}
	G \, \substack{ L_1+L_2 , L_1+L_2 \\ L_1+L_2 , L_1+L_2}
	\left( \frac{{{{\bar{\gamma }}}_{1}}{{A}_{2}}}{z{{{\bar{\gamma }}}_{2}}{{A}_{1}}}
	\left| \substack{ 1\!-\!\Delta_1,\,1\!-\!E_2\\
	\!\Delta_2,\, E_1 } \right.\right),
	\end{align}
	\begin{align}\label{9}
	&{{F}_{Z}}\left( z \right)\!=\!\frac{1}{{{B}_{1}}{{B}_{2}}}
	G\!\! \, \substack{ L_1+L_2+1 , L_1+L_2 \\ L_1+L_2+1 , L_1+L_2+1}
	\left( \frac{{{{\bar{\gamma }}}_{1}}{{A}_{2}}}{z{{{\bar{\gamma }}}_{2}}{{A}_{1}}}
	\left| \substack{ 1\!-\!\Delta_1,\,1\!-\!E_2,1\\
	\!\Delta_2,\, E_1,0 } \right.\right),
	\end{align}	
	\begin{align}\label{10}
	&{{\mathcal{M}}_{Z}}\!\left( s \right)\!\!=\!\frac{1}{{{B}_{1}}{{B}_{2}}}\!
	G \, \substack{ L_1+L_2 , L_1+L_2+1 \\ L_1+L_2+1 , L_1+L_2}
	\left( \frac{{{A}_{1}}}{s{{A}_{2}}}
	\left| \substack{ 1,1\!-\!\Delta_2,\,1\!-\!E_1\\
	\!\Delta_1,\, E_2 } \right.\right),
	\end{align}
	\end{subequations}
{\color{black}	where ${A_i} = \prod\limits_{{\ell _i} = 1}^{{L_i}} {\frac{{{m_{i,{\ell _i}}}}}{{{m_{i,{s_{{\ell _i}}}}} - 1}}} $, ${B_i} = \prod\limits_{{\ell _i} = 1}^{{L_i}} {\Gamma \left( {{m_{i,{\ell _i}}}} \right)\Gamma \left( {{m_{i,{s_{{\ell _i}}}}}} \right)} $, ${{\bar \gamma }_i} = \prod\limits_{{\ell _i} = 1}^{{L_i}} {{{\bar \gamma }_{i,{\ell _i}}}} $, ${\Delta _i} = \{ {m_{i,{\ell _i}}}\} _{{\ell _i} = 1}^{{L_i}}$, ${E_i} = \{ {m_{i,{s_{{\ell _i}}}}}\} _{{\ell _i} = 1}^{{L_i}}$, $\forall i \in \{1,2\}$.}
	\end{prop}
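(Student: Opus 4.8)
\emph{Proof outline.} The plan is to route everything through the Mellin transform (equivalently, the fractional moments) of $Z$: recognise the resulting product of Gamma functions as the Mellin--Barnes kernel of a Meijer's $G$-function, invert it to get \eqref{8}, and then read off \eqref{9} and \eqref{10} from the same kernel by attaching one extra Gamma factor. First I would compute the moments. Since the $\gamma_{1,\ell_1}$ and the $\gamma_{2,\ell_2}$ are mutually independent, $\mathbb{E}[Z^{n}]=\prod_{\ell_1=1}^{L_1}\mathbb{E}[\gamma_{1,\ell_1}^{\,n}]\prod_{\ell_2=1}^{L_2}\mathbb{E}[\gamma_{2,\ell_2}^{-n}]$ for every $n$ in a vertical strip about the origin. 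Substituting \eqref{nth1} with $B(\cdot,\cdot)$ written out in Gamma functions, the factors $\Gamma(m+m_s)$ cancel and one is left with $\mathbb{E}[Z^{n}]=\frac{1}{B_1B_2}\left(\frac{\bar\gamma_1A_2}{\bar\gamma_2A_1}\right)^{n}\prod_{\ell_1}\Gamma(m_{1,\ell_1}+n)\prod_{\ell_2}\Gamma(m_{2,s_{\ell_2}}+n)\prod_{\ell_1}\Gamma(m_{1,s_{\ell_1}}-n)\prod_{\ell_2}\Gamma(m_{2,\ell_2}-n)$, that is, a power of $n$ times $L_1+L_2$ ``ascending'' factors $\Gamma(\cdot+n)$ and $L_1+L_2$ ``descending'' ones $\Gamma(\cdot-n)$ --- exactly the Mellin kernel of a $G_{L_1+L_2,L_1+L_2}^{L_1+L_2,L_1+L_2}$.

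Next I would recover the PDF by Mellin inversion. Writing $\mathbb{E}[Z^{s-1}]$ for the Mellin transform of $f_Z$, the inversion integral $f_Z(z)=\frac{1}{2\pi j}\int_{c-j\infty}^{c+j\infty}\mathbb{E}[Z^{s-1}]\,z^{-s}\,ds$, after the shift $s\mapsto s+1$, is literally the defining Mellin--Barnes integral \cite[eq. (9.301)]{gradshteyn2007} of a Meijer $G$-function in the argument $z\bar\gamma_2A_1/(\bar\gamma_1A_2)$, with lower parameters $\{\Delta_1,E_2\}$ and upper parameters $1-\{E_1,\Delta_2\}$; a single Bromwich contour separating the ascending from the descending poles exists, and uniqueness of the Mellin transform on the strip makes this the density. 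One application of the reflection property $G_{p,q}^{m,n}(x)=G_{q,p}^{n,m}(1/x)$ (with the parameter rows replaced by their complements to $1$ and exchanged) inverts the argument to $\bar\gamma_1A_2/(z\bar\gamma_2A_1)$ and sends the parameter lists to $1-\{\Delta_1,E_2\}$ over $\{\Delta_2,E_1\}$, which is \eqref{8}.

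The CDF and the MGF are then corollaries of the same computation. For the CDF I would use $\int_0^\infty z^{s-1}F_Z(z)\,dz=\tfrac1s\mathbb{E}[Z^{s}]$ (integration by parts); the extra $\tfrac1s=\Gamma(s)/\Gamma(s+1)$ inserts a lower parameter $0$ and an upper parameter $1$ and raises each order by one, which is precisely the passage from \eqref{8} to \eqref{9}. For the MGF I would start from $\mathcal{M}_Z(s)=\int_0^\infty e^{-sz}f_Z(z)\,dz$, substitute the Mellin--Barnes representation of $f_Z$, interchange the two integrations (legitimate by absolute convergence on the contour), and use $\int_0^\infty e^{-sz}z^{t-1}\,dz=\Gamma(t)s^{-t}$; the new factor $\Gamma(t)$ raises the order by one and, after the same reflection step, yields \eqref{10}.

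The analytic ingredients --- Fubini for the interchange, and uniqueness of the Mellin transform --- are routine. The part that requires care, and where I expect the bookkeeping to be most error-prone, is twofold: (i) checking that the moment factorisation holds on a \emph{common} strip about the origin and that the ascending and descending Gamma poles never collide, so that a single contour for the $G$-function is admissible --- this is exactly where the constraints $m_{i,\ell_i}>1/2$ and $m_{i,s_{\ell_i}}>1$ enter; and (ii) tracking which parameters land in the numerator versus the denominator of the Mellin--Barnes kernel through each reflection $G_{p,q}^{m,n}(x)=G_{q,p}^{n,m}(1/x)$ and through the order-raising steps, so as to reproduce exactly the parameter rows displayed in \eqref{8}--\eqref{10}.
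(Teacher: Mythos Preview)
Your argument is correct and complete in outline, but it follows a genuinely different route from the paper's own proof.

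For the PDF, the paper does not compute $\mathbb{E}[Z^{n}]$ at all. Instead it exploits the structural fact that each $\mathcal{F}$ variate is a ratio of two independent gamma variates, so that $Z$ is literally a ratio of two products of $L_1+L_2$ gamma RVs; it then quotes Carter--Springer's $H$-function results \cite[eqs.~(4.9), (4.13)]{CarterSpirnger} to write down \eqref{8} directly. For the CDF and the MGF the paper writes $F_Z(z)=\int_0^\infty H(1-t/z)f_Z(t)\,\mathrm{d}t$ and $\mathcal{M}_Z(s)=\int_0^\infty e^{-sz}f_Z(z)\,\mathrm{d}z$, represents the step function and the exponential as $G$-functions, and applies the Meijer product integral \cite[eq.~(2.24.1.1)]{book}. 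Your route---compute the fractional moments from \eqref{nth1}, recognise the Mellin--Barnes kernel, and attach $\Gamma(s)/\Gamma(s+1)$ or $\Gamma(t)$ for the CDF and MGF---is more self-contained (no reliance on \cite{CarterSpirnger}) and unifies all three formulas through a single object $\mathbb{E}[Z^{s}]$; the paper's route is shorter because it offloads the PDF computation to the literature and treats \eqref{9}--\eqref{10} as routine table look-ups. At bottom the two approaches are equivalent, since the Carter--Springer technique and the product integral in \cite{book} are themselves Mellin-transform identities; you have simply made that mechanism explicit rather than citing it. One small caution on your CDF step: the ordinary Mellin transform of $F_Z$ does not converge (since $F_Z\to1$), so your integration-by-parts identity should be read on the half-plane $\operatorname{Re}s<0$, where the boundary term vanishes and one picks up $-\tfrac{1}{s}\,\mathbb{E}[Z^{s}]=\tfrac{\Gamma(-s)}{\Gamma(1-s)}\,\mathbb{E}[Z^{s}]$; this is exactly the extra pair $(1;0)$ appearing in \eqref{9}, so the bookkeeping closes.
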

	\begin{proof}
	From the mathematical derivation of the squared $\mathcal{F}$-distribution, it can be observed that $Z$ can be expressed as the ratio of two products of $L_1+L_2$ gamma distributed RVs. In the numerator of $Z$, $L_1$ out of the $L_1+L_2$ factors are gamma distributed RVs with parameters $m_{1,\ell_1}$ and $\overline{\gamma}_{1,\ell_1}$, $\forall \ell_1 = 1,\ldots, L_1$ and the remaining $L_2$ factors are gamma distributed RVs with parameters $m_{s_{2,\ell_2}}$ and $m_{2,s_{\ell_2}}/(m_{2,s_{\ell_2}}-1)$, $\forall \ell_2 = 1,\ldots, L_2$.
	In the denominator of $Z$, $L_1$ out of the $L_1+L_2$ factors are gamma distributed RVs with parameters
	$m_{1,s_{\ell_1}}$ and $m_{1,s_{\ell_1}}/(m_{1,s_{\ell_1}}-1)$ and the remaining $L_2$ factors are gamma distributed RVs with parameters
	$m_{2,\ell_2}$ and $\overline{\gamma}_{2,\ell_2}$.
	Using \cite[eq. (4.9]{CarterSpirnger} and \cite[eq. (4.13]{CarterSpirnger}, \eqref{8} can be readily obtained after performing some straightforward algebraic manipulations.
	
	The CDF of $Z$ can be obtained as $F_Z(z) = \int_0^{z}f_z(t)\mathrm{d}t = \int_0^{\infty}H(1-t/z)f_Z(t)\mathrm{d}t$.
	By expressing the unit step function in terms of a Meijer's G-function \cite[eq. (8.4.2.1)]{book}, {\color{black}i.e.,}
	$H(1-t/z) = G \, \substack{ 1 , 0 \\ 1 , 1}
	\left( \frac{z}{t} \left| \substack{ 1\\
	0 } \right.\right)$
	and employing \cite[eq. (2.24.1.1)]{book} yields \eqref{9}.
	
	Finally, the MGF of $Z$ can be obtained as $\mathcal{M}_Z(s) = \int_0^{\infty}\exp(-s\,z)f_Z(z)\mathrm{d}z$.
	By expressing the exponential in terms of a Meijer's G-function \cite[eq. (8.4.3.1)]{book}, {\color{black}i.e.,}
	$\exp(-s\,z) = G \, \substack{ 1 , 0 \\ 0 , 1}
	\left( s\,z \left| \substack{- \\
	0 } \right.\right)$
	and employing \cite[eq. (2.24.1.1)]{book} yields \eqref{10}, thus completing the proof.
	\end{proof}
	\begin{corr}\label{A}
	Let $X \triangleq \gamma_{1} /\gamma_{2}$ where $\gamma_{1}\sim\mathcal{F}\left(\overline{\gamma}_{1}, m_{1}, m_{s_{1}}\right)$ and $\gamma_{2}\sim\mathcal{F}\left(\overline{\gamma}_{2}, m_{2}, m_{s_{2}}\right)$.
	The PDF, the CDF and the MGF of $X$ can be deduced in closed-form as
	\begin{subequations}
	{\small  \begin{align}\label{5}
		&{{f}_{_{X}}}\!\left( x \right)\!=\!\!\frac{{B\!\left( {{m_1}\! +\! {m_2},{m_{s1}}\! + \!{m_{s2}}}\! \right){x^{{m_1} - 1}}}}{{B\!\left( \!{{m_1},{m_{s1}}} \!\right)B\!\left( {{m_2},{m_{s2}}} \!\right)}}{\left( \!{\frac{{\left( {{m_{s2}} - 1}\! \right){{\bar \gamma }_{2}}{m_{1}}}}{{\left( \!{{m_{s1}} - 1} \!\right){{\bar \gamma }_{1}}{m_{2}}}}} \!\right)^{{m_1}}} \nonumber \\
		& \times\!\! {}_2{F_1}\!\!\left( \!\!{{m_1} \!+\! {m_{s1}}\!,\!{m_1}\!+\! {m_2}\!;\!\sum\limits_{\ell  = 1}^2 \!{\left( {{m_\ell }\! +\! {m_{s\ell }}} \right)} \!;\!1 \!-\! \frac{{{m_1}\!\left( \!{{m_{s2}}\! - \!1} \!\right)\!{{\bar \gamma }_2}}}{{{m_2}\!\left(\! {{m_{s1}} \!-\! 1} \!\right)\!{{\bar \gamma }_1}}}x} \!\!\right),
	\end{align}}
	{\small \begin{align}\label{6}
			& {{F}_{X}}\!\left( x \right)\!=\!\!{\left( {\frac{{\left( {{m_{s2}} - 1} \right){{\bar \gamma }_{2}}{m_{1}}}}{{\left( {{m_{s1}} - 1} \right){{\bar \gamma }_{1}}{m_{2}}}}} \right)^{{m_1}}}
			\left[\prod_{i=1}^2\Gamma(m_i)\Gamma(m_{s_i})\right]^{-1}
			\nonumber \\
			& \times G_{3,3}^{3,2}\!\left( \frac{\!{{m}_{2}}\left( {{m}_{s1}}-1 \right){{{\bar{\gamma }}}_{1}}}{x{{m}_{1}}\left( {{m}_{s2}}-1 \right){{{\bar{\gamma }}}_{2}}}\left| \begin{matrix}
				1,1-{{m}_{s2}}+{{m}_{1}},{{m}_{1}}+1  \\
				{{m}_{1}},{{m}_{1}}+{{m}_{s1}},{{m}_{1}}+{{m}_{2}}  \\
			\end{matrix} \right. \right),
		\end{align}}
		{\small \begin{align}\label{7}
				&{{\mathcal{M}}_{X}}\!\left( s \right)\!\!=\!{\left( {\frac{{\left( {{m_{s2}} - 1} \right){{\bar \gamma }_{2}}{m_{1}}}}{{\left( {{m_{s1}} - 1} \right){{\bar \gamma }_{1}}{m_{2}}}}} \right)^{{m_1}}}
				\left[\prod_{i=1}^2\Gamma(m_i)\Gamma(m_{s_i})\right]^{-1}
				\nonumber \\
				& \times G_{3,2}^{2,3}\!\left( \frac{{{m}_{1}}\left( {{m}_{s}}_{2}\!-\!1 \right){{{\bar{\gamma }}}_{2}}}{s{{m}_{2}}\left( {{m}_{s}}_{1}\!-\!1 \right){{{\bar{\gamma }}}_{1}}}\left| \begin{matrix}
					1\!-\!{{m}_{1}}\!-\!{{m}_{s1}}\!,\!1\!-\!{{m}_{1}}-{{m}_{2}}\!,1\!\!-\!\!{{m}_{1}}  \\
					0,-{{m}_{1}}\!+\!{{m}_{s2}}  \\
				\end{matrix} \right.\!\!\right)\!.
			\end{align}}
		\end{subequations}
	\end{corr}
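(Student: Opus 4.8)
The plan is to derive Corollary~\ref{A} as the $L_1=L_2=1$ instance of Proposition~\ref{C}, and then to recast the resulting Meijer $G$-functions into the stated forms.

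\emph{Probability density function.} Setting $L_1=L_2=1$ in Proposition~\ref{C} gives $\Delta_i=\{m_i\}$, $E_i=\{m_{s_i}\}$, $A_i=m_i/(m_{s_i}-1)$ and $B_i=\Gamma(m_i)\Gamma(m_{s_i})$, so that the $G$-function in \eqref{8} collapses to a $G^{2,2}_{2,2}$-function with argument $w=\frac{m_2(m_{s_1}-1)\bar\gamma_1}{m_1(m_{s_2}-1)\bar\gamma_2\,x}$, upper parameters $1-m_1,\,1-m_{s_2}$ and lower parameters $m_2,\,m_{s_1}$. It then remains to reduce this $G^{2,2}_{2,2}$ to a single Gauss hypergeometric function. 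The most transparent way to do so — and the one that naturally produces the argument $1-(\cdot)\,x$ of the ${}_2F_1$ in \eqref{5} — is to compute $f_X$ from first principles: since $X=\gamma_1/\gamma_2$ with $\gamma_1,\gamma_2$ independent, $f_X(x)=\int_0^{\infty} t\, f_{\gamma_1}(xt)\, f_{\gamma_2}(t)\,\mathrm{d}t$, and inserting the $\mathcal{F}$-PDF \eqref{1} reduces this to $\int_0^{\infty} t^{m_1+m_2-1}\,(1+\beta t)^{-(m_1+m_{s_1})}\,(1+\delta t)^{-(m_2+m_{s_2})}\,\mathrm{d}t$ with $\beta=\frac{m_1 x}{(m_{s_1}-1)\bar\gamma_1}$ and $\delta=\frac{m_2}{(m_{s_2}-1)\bar\gamma_2}$. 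Evaluating it by means of a standard tabulated integral, e.g.\ \cite[eq. (3.197.5)]{gradshteyn2007}, yields $\delta^{-(m_1+m_2)}B(m_1+m_2,m_{s_1}+m_{s_2})\,{}_2F_1\!\left(m_1+m_{s_1},m_1+m_2;\textstyle\sum_{\ell=1}^{2}(m_\ell+m_{s_\ell});1-\beta/\delta\right)$; collecting the remaining prefactors and using $1-\beta/\delta=1-\frac{m_1(m_{s_2}-1)\bar\gamma_2}{m_2(m_{s_1}-1)\bar\gamma_1}x$ gives \eqref{5} after elementary simplifications.

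\emph{Cumulative distribution function and moment generating function.} These are obtained along the same lines. One route is to specialize \eqref{9} and \eqref{10} to $L_1=L_2=1$ — whereby the $G$-function orders become $G^{3,2}_{3,3}$ and $G^{2,3}_{3,2}$, respectively — and then to apply the Meijer $G$-function translation identity, which multiplies a $G$-function by a power of its argument and shifts all of its parameters by the same constant; choosing that constant to be $m_1$ brings the parameter lists to those displayed in \eqref{6} and \eqref{7}. An equivalent route starts from \eqref{5}: rewriting ${}_2F_1$ as a Meijer $G$-function and using $F_X(x)=\int_0^{x} f_X(t)\,\mathrm{d}t$ and $\mathcal{M}_X(s)=\int_0^{\infty}e^{-sx}f_X(x)\,\mathrm{d}x$ together with the unit-step and exponential $G$-representations \cite[eq. (8.4.2.1)]{book}, \cite[eq. (8.4.3.1)]{book} and the master integral \cite[eq. (2.24.1.1)]{book}, exactly as in the proof of Proposition~\ref{C}.

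Throughout, the computation is essentially algebraic. The only point that demands care is the bookkeeping of parameters: in the first route, tracking the uniform shift by $m_1$ and the (inconsequential) reordering of the lower $G$-function parameters; in the second route, the connection formula for the Gauss hypergeometric function about its singular point $z=1$, which is what collapses the two ${}_2F_1$-series appearing in the residue expansion of the $G^{2,2}_{2,2}$-function into the single ${}_2F_1$ of \eqref{5} (for non-generic parameter values these identities are understood in the usual limiting sense). No single step presents a genuine obstacle once the appropriate integral representation has been chosen.
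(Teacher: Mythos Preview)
Your proposal is correct. For the CDF and MGF your first route coincides with the paper's own argument: the paper simply states that \eqref{6} and \eqref{7} are special cases of \eqref{9} and \eqref{10}, and the translation identity you invoke (uniform shift of all parameters by $m_1$, absorbing the factor $w^{-m_1}$ into the displayed prefactor) together with a reordering of the lower parameters is precisely the bookkeeping needed to match the stated form.

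Where you differ from the paper is in the derivation of the PDF. The paper obtains \eqref{5} by specializing \eqref{8} to $L_1=L_2=1$, which gives a $G^{2,2}_{2,2}$, and then invoking \cite[eq.~(8.4.51.1)]{book} to collapse that $G^{2,2}_{2,2}$ to a single ${}_2F_1$. You instead bypass the $G$-function altogether and compute the ratio integral $\int_0^\infty t\,f_{\gamma_1}(xt)f_{\gamma_2}(t)\,\mathrm{d}t$ directly via the tabulated integral of the type $\int_0^\infty t^{a-1}(1+\beta t)^{-b}(1+\delta t)^{-c}\,\mathrm{d}t$. Both routes are valid; the paper's is shorter and keeps the whole corollary as a pure specialization of Proposition~\ref{C}, whereas yours is more self-contained and makes the appearance of the argument $1-\frac{m_1(m_{s_2}-1)\bar\gamma_2}{m_2(m_{s_1}-1)\bar\gamma_1}x$ in the ${}_2F_1$ immediately transparent. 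One minor caveat: the relevant Gradshteyn--Ryzhik entry for that integral is not 3.197.5 (which is a finite-interval formula); the infinite-range version you need is elsewhere in that section (e.g.\ the entries giving $\int_0^\infty x^{\lambda-1}(1+\alpha x)^{-\mu}(1+\beta x)^{-\nu}\,\mathrm{d}x$ in terms of ${}_2F_1$), so double-check the citation before finalizing.
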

	\begin{IEEEproof}
	Equation \eqref{5} can be obtained by employing \cite[eq. (8.4.51.1)]{book}. Equations \eqref{6} and \eqref{7} are special cases of \eqref{9} and \eqref{10}, respectively.
	\end{IEEEproof}		
	\section{New Closed-Form Approximations}\label{Sec:APPROXIMATION}
	In this section, accurate closed-form approximations to the distribution of the ratio of products of $\mathcal{F}$ RVs are presented. {\color{black} Because of the fact that the Fisher-Snedecor $\mathcal{F}$ RV can be obtained as the product of a gamma and an inverse gamma RV, ratios and products of $\mathcal{F}$ RVs can be treated as products of RVs. Motivated by the CLT-based technique mentioned in the introduction section, in this work we propose to use the log-normal distribution as an accurate closed-form approximation to the distribution of the products and ratios of $\mathcal{F}$ RVs. Furthermore, extensive experiments carried out by using the Matlab distribution fit tool have revealed that the log-normal distribution can indeed serve as an efficient approximation to a single $\mathcal{F}$ distribution as well as to the ratio of products of $\mathcal{F}$ RVs.} The parameters of the log-normal distribution are obtained using the moment matching method.
	
	\subsection{Log-Normal Approximation to a Single Squared $\mathcal{F}$-distributed RV}\label{4A}
	
	Hereafter, $Y \sim {\cal L}\left( {\mu ,\sigma } \right)$ denotes that the RV $Y$ follows the log-normal distribution with parameters $\sigma$ and $\mu$. The PDF of $Y$ is given as
	\begin{equation}
	{f_Y}\left( y \right) = \frac{1}{y}\frac{1}{{\sigma \sqrt {2\pi } }}\exp \left( { - \frac{{{{\left( {\ln y - \mu } \right)}^2}}}{{2{\sigma ^2}}}} \right).
	\end{equation}
	The $n^{t h}$ moment of $Y$ can be expressed as
	\begin{equation}\label{nth2}
		\mathbb{E}\left[Y^{n}\right]=e^{n \mu+n^{2} \sigma^{2} / 2}.
	\end{equation}
	The application of the moment matching method for the first two moments of the Fisher-Snedecor $\mathcal{F}$ distribution and the approximated log-normal distribution yields
	\begin{equation}\label{f1}
	{e^{\mu  + \frac{1}{2}{\sigma ^2}}} = \frac{{\left( {{m_s} - 1} \right)\bar \gamma }}{m}\frac{{B\left( {m + 1,{m_s} - 1} \right)}}{{B\left( {m,{m_s}} \right)}},
	\end{equation}
	\begin{equation}\label{f2}	
	{e^{2\mu  + 2{\sigma ^2}}} = {\left( {\frac{{\left( {{m_s} - 1} \right)\bar \gamma }}{m}} \right)^2}\frac{{B\left( {m + 2,{m_s} - 2} \right)}}{{B\left( {m,{m_s}} \right)}}.
	\end{equation}
	By combining \eqref{f1} and \eqref{f2}, the parameters $\sigma$ and $\mu$ can be deduced as
	\begin{equation}\label{f3}
		\left\{ \begin{array}{l}
			{\sigma ^2} = \ln \left( {{Y_f}} \right),\\
			\mu  = \ln \left( {{H_f}} \right) - \frac{1}{2}\ln \left( {{Y_f}} \right),
		\end{array} \right.
	\end{equation}
	where ${H_f} = \bar \gamma$, and ${Y_f} = \frac{{\left( {{m_s} - 1} \right)\left( {m + 1} \right)}}{{m\left( {{m_s} - 2} \right)}}$.
	
	It is noted that the expressions of distribution parameters given by \eqref{f3} may result in a relatively large approximation error, especially in the lower and upper tail regions. This is because a good fit happens only around the mean by matching the first and second moments. In order to address this issue, we adopt a modified form by considering an adjustment factor ($\varepsilon$), {\color{black}i.e.,}
	\begin{equation}
		\left\{ \begin{array}{l}
			{\sigma ^2} = \ln \left( {{Y_f} - \varepsilon } \right),\\
			\mu  = \ln \left( {{H_f}} \right) - \frac{1}{2}\ln \left( {{Y_f} - \varepsilon } \right),
		\end{array} \right.
	\end{equation}
	where $\varepsilon$ is bounded as $-{Y_f} \leq \varepsilon \leq {Y_f}$. To obtain $\varepsilon$, the numerical measure of the different (or absolute difference of two continuous distributions, {\color{black}i.e.,} Kolmogorov distance) between the exact and approximate PDFs (or CDFs) are commonly recommended.

	\subsection{Log-Normal Approximation to the Products of Two Squared $\mathcal{F}$-distributed RVs}\label{4B}
	In what follows, we propose to use the log-normal distribution as an accurate approximation to the statistics of the product of two $\mathcal{F}$-distributed RVs.
	Let $Z = X\,Y$, where
	$X \sim \mathcal{F}\left( \bar{\gamma }_x;m_x;{{m}_{s,x}} \right)$ and $Y \sim \mathcal{F}\left( \bar{\gamma }_y;m_y;{{m}_{s,y}} \right)$.
	The first and second moments of $Z$ is given by
	\begin{equation}
		\mathbb{E}\left[ Z \right] = \mathbb{E}\left[ X \right]\mathbb{E}\left[ Y \right],
	\end{equation}
	and
	\begin{equation}
		\mathbb{E}\left[ {{Z^2}} \right] = \mathbb{E}\left[ {{X^2}} \right]\mathbb{E}\left[ {{Y^2}} \right],
	\end{equation}
	respectively.
	Matching the first and second moments of $Z$ yields the following system of equations
	\begin{align}
	&{e^{\mu  + \frac{1}{2}{\sigma ^2}}} = \frac{{\left( {{m_{s,x}} - 1} \right){{\bar \gamma }_x}}}{{{m_x}}}
	\frac{{B\left( {{m_x} + 1,{m_{s,x}} - 1} \right)}}
	{{B\left( {{m_x},{m_{s,x}}} \right)}}
	\nonumber \\
	& \times \frac{{\left( {{m_{s,y}} - 1} \right){{\bar \gamma }_y}}}{{{m_y}}}\frac{{B\left( {{m_y} + 1,{m_{s,y}} - 1} \right)}}{{B\left( {{m_y},{m_{s,y}}} \right)}},
	\end{align}
	\begin{align}
	&{e^{2\mu  + 2{\sigma ^2}}} = {\left( {\frac{{\left( {{m_{s,x}} - 1} \right){{\bar \gamma }_x}}}{{{m_x}}}} \right)^2}\frac{{B\left( {{m_x} + 2,{m_{s,x}} - 2} \right)}}{{B\left( {{m_x},{m_{s,x}}} \right)}}
	\nonumber \\
	& \times{\left( {\frac{{\left( {{m_{s,y}} - 1} \right){{\bar \gamma }_y}}}{{{m_y}}}} \right)^2}\frac{{B\left( {{m_y} + 2,{m_{s,y}} - 2} \right)}}{{B\left( {{m_y},{m_{s,y}}} \right)}}.
	\end{align}
	The parameters of the approximating log-normal distribution can therefore be expressed as
\begin{equation}\label{prof1}
	\left\{ \begin{array}{l}
		{\sigma ^2} = \ln \left( {{Y_{pro}}} \right),\\
		\mu  = \ln \left( {{H_{pro}}} \right) - \frac{1}{2}\ln \left( {{Y_{pro}}} \right),
	\end{array} \right.
\end{equation}
	where
\begin{subequations}
	\begin{equation}
		{H_{pro}} = {H_{pro,1}}{H_{pro,2}} = {{\bar \gamma }_x}{{\bar \gamma }_y},
	\end{equation}
	\begin{align}
		{Y_{pro}} &= {Y_{pro,1}}{Y_{pro,2}} = \frac{{\left( {{\rm{1 + }}{m_x}} \right)\left( {{m_{s,x}} - 1} \right)}}{{{m_x}\left( {{m_{s,x}} - 2} \right)}}\nonumber \\
		&\times \frac{{\left( {{\rm{1 + }}{m_y}} \right)\left( {{m_{s,y}} - 1} \right)}}{{{m_y}\left( {{m_{s,y}} - 2} \right)}}.
	\end{align}
\end{subequations}
	
	The adjusted forms of \eqref{prof1}	can be written as
\begin{equation}\label{prof2}
	\left\{ \begin{array}{l}
		{\sigma ^2} = \ln \left( {\left( {{Y_{pro,1}} - {\varepsilon _1}} \right)\left( {{Y_{pro,2}} - {\varepsilon _2}} \right)} \right),\\
		\mu  = \ln \left( {{H_{pro,1}}{H_{pro,2}}} \right) - \frac{1}{2}\ln \left( {\left( {{Y_{pro,1}} - {\varepsilon _1}} \right)\left( {{Y_{pro,2}} - {\varepsilon _1}} \right)} \right).
	\end{array} \right.
\end{equation}
	Note that the expressions in \eqref{prof2} can be extended to the products of $N$ independent $\mathcal{F}$-distributed RVs as \eqref{prof1}, where
 \begin{subequations}
 	\begin{align}
 		{H_{pro}} = \prod\limits_{i = 1}^L {{{\bar \gamma }_i}}  = \prod\limits_{i = 1}^L {{H_{pro,i}}},
 	\end{align}
 	\begin{align}
 		{Y_{pro}} = \prod\limits_{i = 1}^L {\left( {\frac{{\left( {{\rm{1 + }}{m_i}} \right)\left( {{m_{s,i}} - 1} \right)}}{{{m_i}\left( {{m_{s,i}} - 2} \right)}} - {\varepsilon _i}} \right)}  = \prod\limits_{i = 1}^L {\left( {{Y_{pro,i}} - {\varepsilon _i}} \right)}.
 	\end{align}
 \end{subequations}
	The parameter ${\varepsilon_i}$ can be computed in a similar manner as the one proposed in Section \ref{4A}.
	
	Finally, assuming independent and identically (i.i.d.) factors with $m_x = m_y = m$, $m_{s,x} = m_{s,y} = m_s$ and
	$\overline{\gamma}_x = \overline{\gamma}_y = \overline{\gamma}$, the above expressions simplify to
	\begin{equation}
		\left\{ \begin{array}{l}
			{\sigma ^2} = N\ln \left( {\frac{{\left( {{\rm{1 + }}m} \right)\left( {{m_s} - 1} \right)}}{{m\left( {{m_s} - 2} \right)}} - \varepsilon } \right),\\
			\mu  = N\ln \left( {\overline{ \gamma}} \right) - N\frac{1}{2}\ln \left( {\frac{{\left( {{\rm{1 + }}m} \right)\left( {{m_s} - 1} \right)}}{{m\left( {{m_s} - 2} \right)}} - \varepsilon } \right).
		\end{array} \right.
	\end{equation}
	
	\subsection{Log-Normal approximation to the Ratio of Two Squared $\mathcal{F}$-distributed RVs}\label{4C}
	Hereafter, we use the moment matching method to approximate the statistics of the ratio $Z = X/Y$, where $X \sim \mathcal{F}\left( \bar{\gamma }_x;m_x;{{m}_{s,x}} \right)$ and $Y \sim \mathcal{F}\left( \bar{\gamma }_y;m_y;{{m}_{s,y}} \right)$,
	with a log-normal distribution. The first and second moments of $Z$, can be expressed as
	\begin{equation}
		\mathbb{E}\left[ Z \right] = \mathbb{E}\left[ X \right]\mathbb{E}\left[ {\frac{1}{Y}} \right],
	\end{equation}
	and
	\begin{equation}
		\mathbb{E}\left[ {{Z^2}} \right] = \mathbb{E}\left[ {{X^2}} \right]\mathbb{E}\left[ {\frac{1}{{{Y^2}}}} \right].
	\end{equation}
	The moment matching method yields estimators for $\sigma$ and $\mu$ as the solution of the following system of equations
	\begin{align}
	{e^{\mu  + \frac{1}{2}{\sigma ^2}}} &= \frac{{\left( {{m_{s,x}} - 1} \right){{\bar \gamma }_x}}}{{{m_x}}}\frac{{B\left( {{m_x} + 1,{m_{s,x}} - 1} \right)}}{{B\left( {{m_x},{m_{s,x}}} \right)}} \nonumber\\
	&\times \frac{{{m_y}}}{{\left( {{m_{s,y}} - 1} \right){{\bar \gamma }_y}}}\frac{{B\left( {{m_y} - 1,{m_{s,y}} + 1} \right)}}{{B\left( {{m_y},{m_{s,y}}} \right)}},
	\end{align}
	\begin{align}
	{e^{2\mu  + 2{\sigma ^2}}} &= {\left( {\frac{{\left( {{m_{s,x}} - 1} \right){{\bar \gamma }_x}}}{{{m_x}}}} \right)^2}\frac{{B\left( {{m_x} + 2,{m_{s,x}} - 2} \right)}}{{B\left( {{m_x},{m_{s,x}}} \right)}} \nonumber \\
	& \times {\left( {\frac{{\left( {{m_{s,y}} - 1} \right){{\bar \gamma }_y}}}{{{m_y}}}} \right)^{ - 2}}\frac{{B\left( {{m_y} - 2,{m_{s,y}} + 2} \right)}}{{B\left( {{m_y},{m_{s,y}}} \right)}}.
	\end{align}
	The solution of this system can be obtained in closed-form as
	\begin{equation}\label{ratio1}
		\left\{ \begin{array}{l}
			{\sigma ^2} = \ln \left( {{Y_{ratio}}} \right),\\
			\mu  = \ln \left( {{H_{ratio}}} \right) - \frac{1}{2}\ln \left( {{Y_{ratio}}} \right),
		\end{array} \right.
	\end{equation}
	where
\begin{subequations}
	\begin{align}
		{H_{ratio}} = \frac{{{m_{s,y}}{m_y}}}{{\left( {{m_y} - 1} \right)\left( {{m_{s,y}} - 1} \right)}}\frac{{{{\bar \gamma }_x}}}{{{{\bar \gamma }_y}}},
	\end{align}
	\begin{align}
		{Y_{ratio}} = \frac{{\left( {{m_{s,x}} - 1} \right)\left( {1 + {m_x}} \right)}}{{{m_x}\left( {{m_{s,x}} - 2} \right)}}\frac{{\left( {{m_y} - 1} \right)\left( {1 + {m_{s,y}}} \right)}}{{{m_{s,y}}\left( {{m_y} - 2} \right)}}.
	\end{align}
\end{subequations}
	
	After considering the adjustment factor	$\varepsilon$, eq. \eqref{ratio1} can be rewritten as
	\begin{equation}\label{ratio2}
		\left\{ \begin{array}{l}
			{\sigma ^2} = \ln \left( {{Y_{ratio}} - \varepsilon } \right),\\
			\mu  = \ln \left( {{H_{ratio}}} \right) - \frac{1}{2}\ln \left( {{Y_{ratio}} - \varepsilon } \right).
		\end{array} \right.
	\end{equation}
	The adjustment factor $\varepsilon$ can be obtained as before.

	\subsection{Log-Normal Approximation to the Ratio of Products of Squared $\mathcal{F}$-distributed RVs}
	In what follows, the ratio of products of squared $\mathcal{F}$-distributed RVs is approximated with the log-normal distribution using the moment matching method. {\color{black}Let $Z = \frac{X}{Y}$, where $X \triangleq \prod\limits_{{{\ell }_{1}}=1}^{{{L}_{1}}}{{{\gamma }_{1,{\ell }_{1}}}}$, $Y =\prod\limits_{{{\ell }_{2}}=1}^{{{L}_{2}}}{{{\gamma }_{2,{\ell }_{2}}}}$, $\gamma_{1,\ell_1}\sim\mathcal{F}\left(\overline{\gamma}_{1,\ell_1}, m_{1,\ell_1}, m_{1,s_{\ell_1}}\right)$ $(\ell_1=1, \cdots, L_1)$ and
		$\gamma_{2,\ell_2}\sim\mathcal{F}\left(\overline{\gamma}_{2,\ell_2}, m_{2,\ell_2}, m_{2,s_{\ell_2}}\right)$ $(\ell_2=1, \cdots, L_2)$.}
Again, we can match the first two positive moments by using an adjustable form for the parameters obtained in \eqref{4B} and \eqref{4A}.
In particular, one obtains
\begin{equation}
	\mathbb{E}\left[ Z \right] = \mathbb{E}\left[ X \right]\mathbb{E}\left[ {\frac{1}{Y}} \right]
\end{equation}
and
\begin{equation}
	\mathbb{E}\left[ {{Z^2}} \right] = \mathbb{E}\left[ {{X^2}} \right]\mathbb{E}\left[ {\frac{1}{{{Y^2}}}} \right].
\end{equation}
	The application of the moment matching method yields

	\begin{align}
	{e^{\mu  + \frac{1}{2}{\sigma ^2}}}{\rm{ = }}\prod\limits_{{\ell _1} = 1}^{{L_{\rm{1}}}} {{{\bar \gamma }_{1,{\ell _1}}}} \prod\limits_{{\ell _2} = 1}^{{L_2}} {\frac{{{m_{2,{s_{{\ell _2}}}}}{m_{2,{\ell _2}}}}}{{\left( {{m_{2,{\ell _2}}} - 1} \right)\left( {{m_{2,{s_{{\ell _2}}}}} - 1} \right)}}\frac{1}{{{{\bar \gamma }_{2,{\ell _2}}}}}} ,
	\end{align}
	\begin{align}
	{e^{2\mu  + 2{\sigma ^2}}} &= \prod\limits_{{\ell _1} = 1}^{{L_1}} {\frac{{\left( {{\rm{1 + }}{m_{1,{\ell _1}}}} \right)\left( {{m_{1,{s_{{\ell _1}}}}} - 1} \right)}}{{{m_{1,{\ell _1}}}\left( {{m_{1,{s_{{\ell _1}}}}} - 2} \right)}}}
	\nonumber \\
	&\times
	\prod\limits_{{\ell _2} = 1}^{{L_2}} {\frac{{\left( {{m_{2,{\ell _2}}} - 1} \right)\left( {1 + {m_{2,{s_{{\ell _2}}}}}} \right)}}{{{m_{2,{s_{{\ell _2}}}}}\left( {{m_{2,{\ell _2}}} - 2} \right)}}}
	\nonumber \\
	&\times \prod\limits_{i = 1}^{{L_1}} {\bar \gamma _{1,{\ell _1}}^2} \prod\limits_{{\ell _2} = 1}^{{L_2}} {{{\left( {\frac{{{m_{2,{s_{{\ell _2}}}}}{m_{2,{\ell _2}}}}}{{\left( {{m_{2,{\ell _2}}} - 1} \right)\left( {{m_{2,{s_{{\ell _2}}}}} - 1} \right)}}} \right)}^2}\frac{1}{{\bar \gamma _{2,{\ell _2}}^2}}} .
	\end{align}
	
	The above equations can be written as
	\begin{equation}\label{proratio1}
		\left\{ \begin{array}{l}
			{\sigma ^2} = \ln \left( {Y_{ratio}^{pro} } \right),\\
			\mu  = \ln \left( {H_{ratio}^{pro}} \right) - \frac{1}{2}\ln \left( {Y_{ratio}^{pro}  } \right),
		\end{array} \right.
	\end{equation}
	where
\begin{subequations}
	\begin{align}
		H_{ratio}^{pro} &= \prod\limits_{\ell_1 = 1}^{{L_{1}}} {H_{\ell_1,ratio}^{pro}} \prod\limits_{\ell_2 = 1}^{{L_2}} {H_{\ell_2,ratio}^{pro}}
		\nonumber \\
		&= \prod\limits_{\ell_1 = 1}^{{L_{1}}} {{\overline{\gamma}_{1,\ell_1}}} \prod\limits_{\ell_2 = 1}^{{L_2}} {\frac{{{m_{2,s_{\ell_2}}}{m_{2,\ell_2}}}}{{\left( {{m_{2,\ell_2}} - 1} \right)\left( {{m_{2,s_{\ell_2}}} - 1} \right)}}\frac{1}{{{\overline{\gamma}_{2,\ell_2}}}}},
	\end{align}
	\begin{align}
		& Y_{ratio}^{pro} = \prod\limits_{\ell_1 = 1}^{{L_{1}}} {Y_{\ell_1,ratio}^{pro}} \prod\limits_{\ell_2 = 1}^{{L_2}} {Y_{\ell_2,ratio}^{pro}}
		\nonumber \\
		&= \prod\limits_{\ell_1 = 1}^{{L_1}} {\frac{{\left( {{1+}{m_{1,\ell_1}}} \right)\left( {{m_{1,s_{\ell_1}}} - 1} \right)}}{{{m_{1,\ell_1}}\left( {{m_{1,s_{\ell_1}}} - 2} \right)}}} \prod\limits_{\ell_2 = 1}^{{L_2}} {\frac{{\left( {{m_{2,\ell_2}} - 1} \right)\left( {1 + {m_{2,s_{\ell_2}}}} \right)}}{{{m_{2,s_{\ell_2}}}\left( {{m_{2,\ell_2}} - 2} \right)}}}.
	\end{align}
\end{subequations}
	
	The adjusted forms of \eqref{proratio1} can be deduced as
\begin{subequations}
	\begin{align}
		{\sigma ^2} = \ln \left( {\prod\limits_{\ell_1 = 1}^{{L_{\rm{1}}}} {\left( {Y_{\ell_1,ratio}^{pro} - {\varepsilon _{1,\ell_1}}} \right)} \prod\limits_{\ell_2 = 1}^{{L_2}} {\left( {Y_{\ell_2,ratio}^{pro} - {\varepsilon _{2,\ell_2}}} \right)} } \right),
	\end{align}
	\begin{align}
		\mu  &= \ln \left( {\prod\limits_{\ell_1 = 1}^{{L_{\rm{1}}}} {H_{\ell_1,ratio}^{pro}} \prod\limits_{\ell_2 = 1}^{{L_2}} {H_{\ell_2,ratio}^{pro}} } \right)
		\nonumber \\
		&- \frac{1}{2}\ln \left( {\prod\limits_{\ell_1 = 1}^{{L_{\rm{1}}}} {\left( {Y_{\ell_1,ratio}^{pro} - {\varepsilon _{1,\ell_1}}} \right)} \prod\limits_{\ell_2 = 1}^{{L_2}} {\left( {Y_{\ell_2,ratio}^{pro} - {\varepsilon _{2,\ell_2}}} \right)} } \right).
	\end{align}
\end{subequations}
	
	For the i.i.d. case, \eqref{proratio1} can be expressed as
\begin{subequations}
	\begin{align}
		{\sigma ^2} = L\ln \left( {\frac{{\left( {{\rm{1 + }}m} \right)\left( {{m_s} - 1} \right)\left( {m - 1} \right)\left( {{m_s} + 1} \right)}}{{m{m_s}\left( {{m_s} - 2} \right)\left( {m - 2} \right)}} - \varepsilon } \right),
	\end{align}
	\begin{align}
		\mu  &= L\ln \left( {\frac{{{m_s}m}}{{\left( {{m_s} - 1} \right)\left( {m - 1} \right)}}} \right)\nonumber  \\
		& - \frac{L}{2}\ln \left( {\frac{{\left( {{\rm{1 + }}m} \right)\left( {{m_s} - 1} \right)\left( {m - 1} \right)\left( {{m_s} + 1} \right)}}{{m{m_s}\left( {{m_s} - 2} \right)\left( {m - 2} \right)}} - \varepsilon } \right).
	\end{align}
\end{subequations}

	\subsection{KS Goodness-of-fit tests}
		\begin{figure}[t]
			\centering
			\includegraphics[scale=0.65]{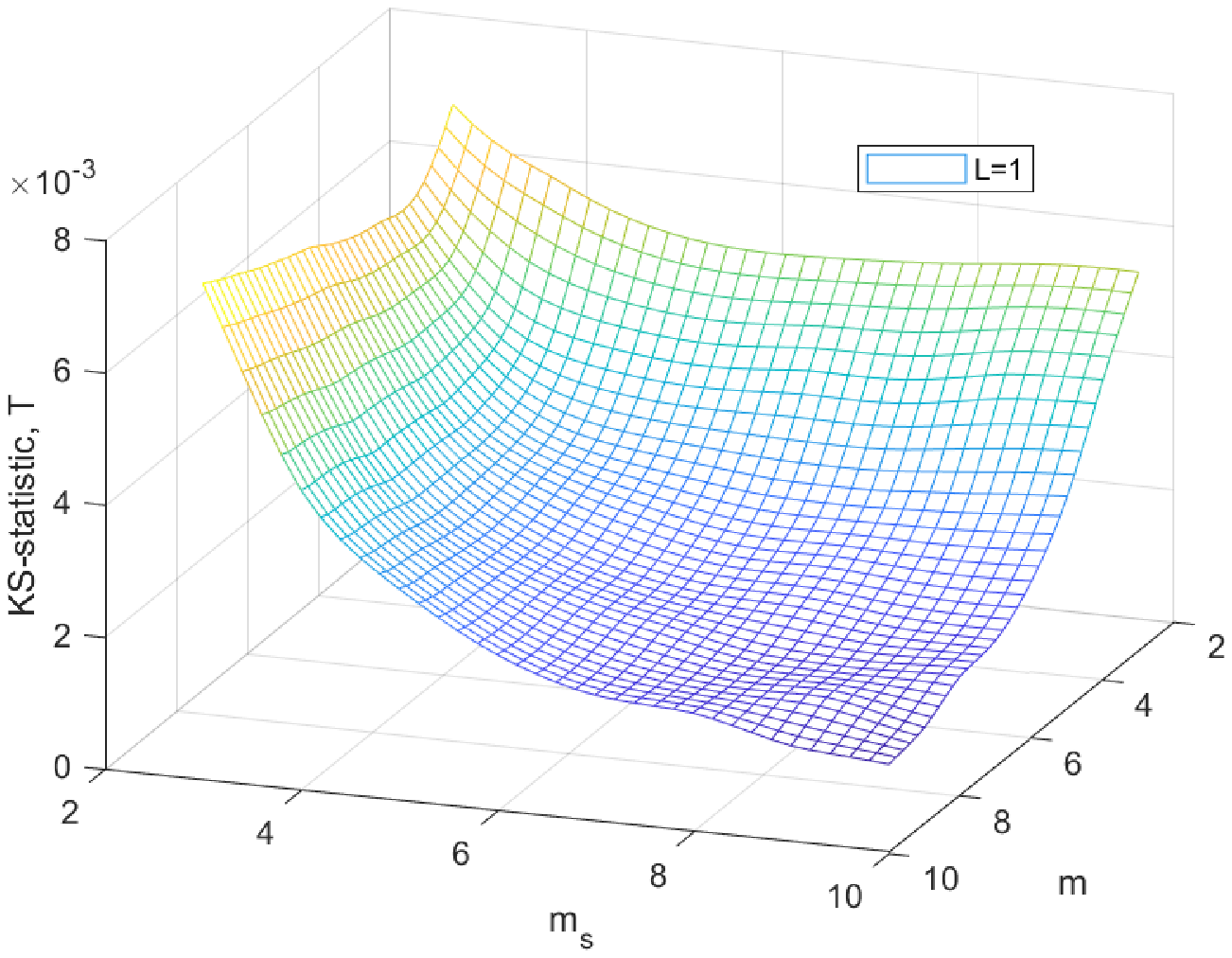}
			\caption{KS goodness-of-fit test statistic for the exact and the approximate distributions with 5$\%$ significance level for $L=1$.}
			\label{ks1}
		\end{figure}
		\begin{figure}[t]
			\centering
			\includegraphics[scale=0.65]{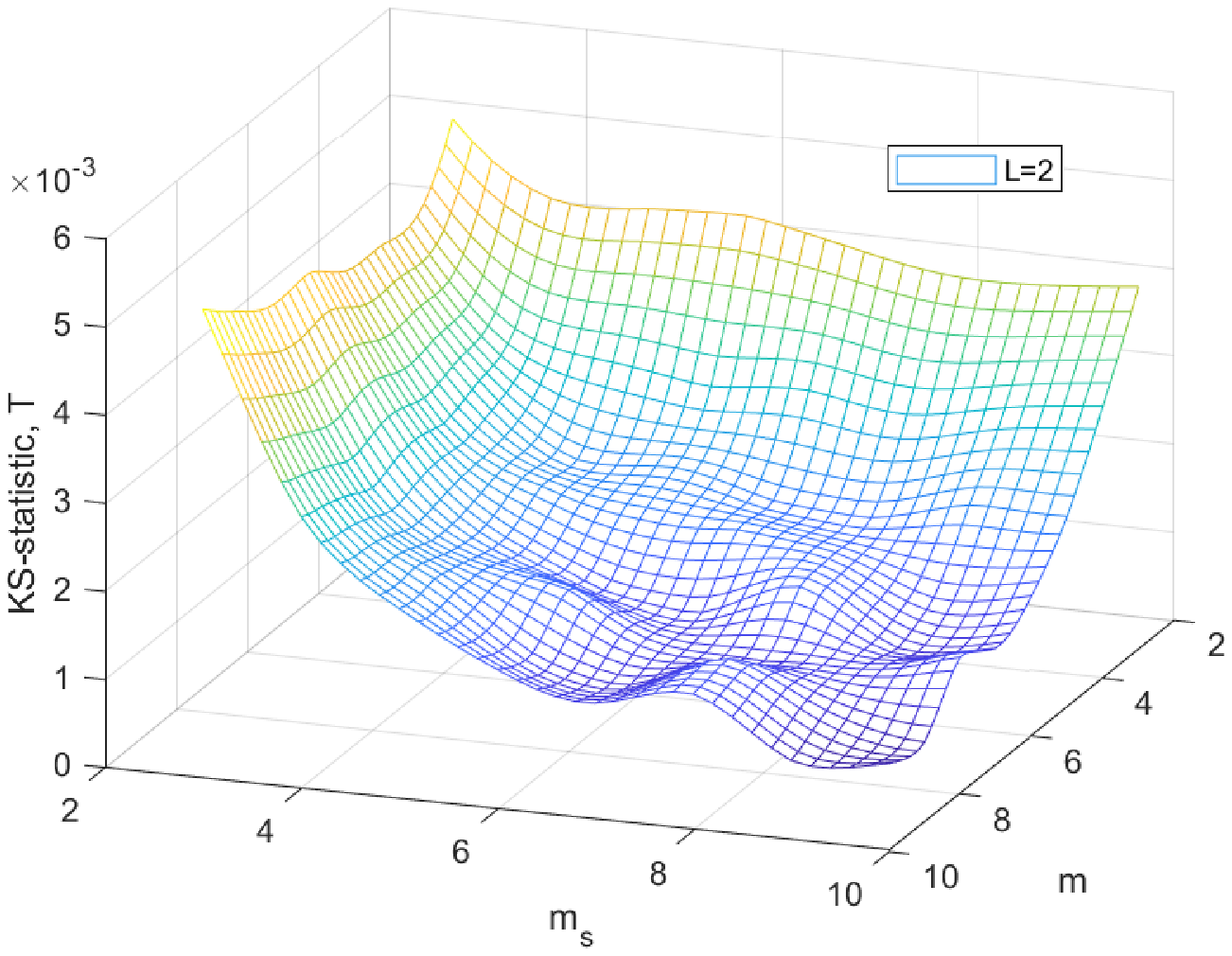}
			\caption{KS goodness-of-fit test statistic for the exact and the approximate distributions with 5$\%$ significance level for $L=2$.}
			\label{ks2}
		\end{figure}
		\begin{figure}[t]
			\centering
			\includegraphics[scale=0.65]{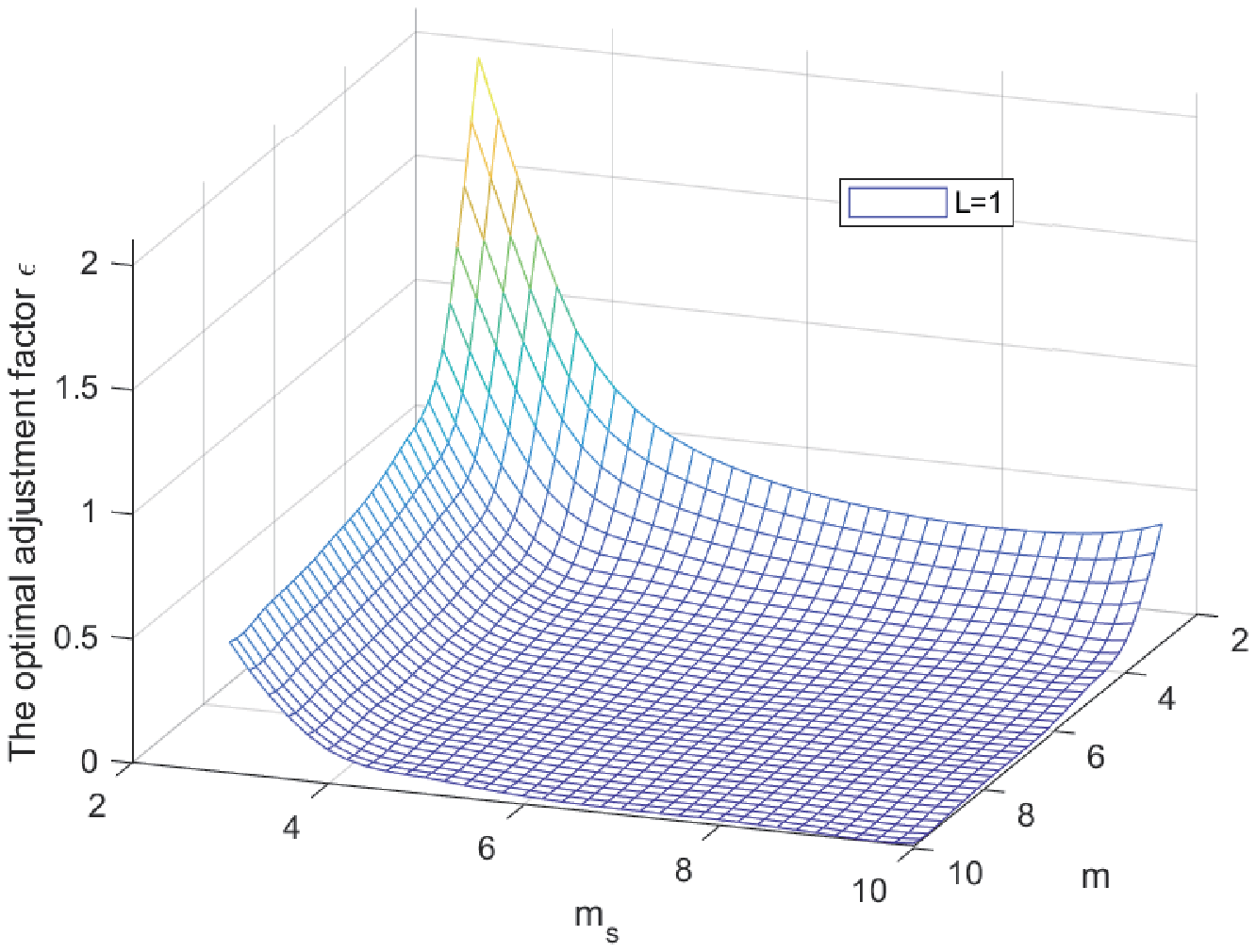}
			\caption{Adjustment factor achieving the minimum gap between the exact and the approximate log-normal distributions for $L=1$.}
			\label{ks3}
		\end{figure}
		\begin{figure}[t]
			\centering
			\includegraphics[scale=0.65]{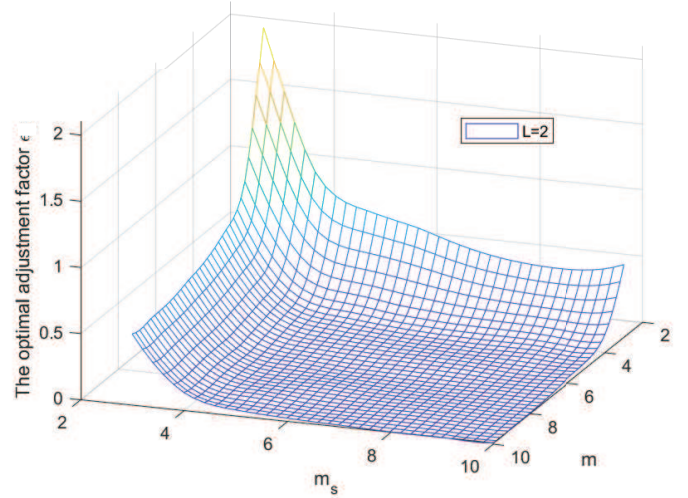}
			\caption{Adjustment factor achieving the minimum gap between the exact and the approximate log-normal distributions for $L=2$.}
			\label{ks4}
		\end{figure}
	The accuracy of the proposed approximations can be measured by using the Kolmogorov-Smirnov (KS) goodness-of-fit statistical test \cite{papoulis2002probability}. The KS test is defined as the largest absolute difference between the two cumulative distribution functions. Mathematically speaking, the KS test is expressed as
\begin{equation}
	T \triangleq \max \left|F_{S_{\gamma}}(z)-F_{\hat{S}_{\gamma}}(z)\right|,
\end{equation}	
	where $F_{S_{\gamma}}(z)$ is the empirical CDF of RV $S_{\gamma}$ and $F_{\hat{S}_{\gamma}}(z)$ is the analytical CDF of the approximate RV $\hat{S}_{\gamma}$.
	%
	In what follows, we consider the ratio of products of $L$ i.i.d RVs with parameters $m$, $m_s$ and average SNRs of 1 dB.		
	Figs. \ref{ks1} and \ref{ks2} depict the KS test statistic for different combinations of the parameters $m$, $m_s$, assuming $L=1$ and $L=2$, respectively. The exact results have been obtained by averaging at least $v=10^{4}$ random samples of $\mathcal{F}$-distributed RVs.
	Let us define $H_{0}$ as the null hypothesis under which the observed data of $S_{\gamma}$, {\color{black}i.e.,} the distribution of the ratio of products of $\mathcal{F}$-distributed RVs, belong to the CDF of the approximate distribution $F_{\hat{S}_{\gamma}}(\cdot)$, {\color{black}i.e.,} the log-normal distribution. The null hypothesis is accepted if $T<T_{\mathrm{max}}$, where $T_{\max }=\sqrt{-(1 / 2 v) \ln (\alpha / 2)}$, is the critical value which corresponds to a significance level of $\alpha$ \cite{papoulis2002probability}. For $\alpha = 5 \%$ one obtains $T_{\mathrm{max}}=0.0136$.
	Figs. \ref{ks1} and \ref{ks2} depict the test statistic, $T$, with 5$\%$ significance level for various values of $m$, $m_s$  and $L=1$ or $L=2$, respectively.
	The adjustment factor achieving the minimum gap
	between the exact and the approximated distributions is depicted in Figs. \ref{ks3} and Fig. \ref{ks4} for $L=1$ and $L=2$, respectively.
	It is clearly observed from Figs. \ref{ks1} and \ref{ks2} that the hypothesis $H_{0}$ is always accepted with 95$\%$ significance for different combinations of parameters. Hence, the log-normal distribution can be regarded as a highly accurate approximation to the exact one.
	
	\section{Applications to physical layer security and full-duplex relaying with co-channel interference}\label{Sec:Applications}
	In this section, applications of the proposed analytical framework to physical layer security and full -duplex relaying with co-channel interference are presented.
	\subsection{Exact Analysis}
	\subsubsection{Physical Layer Security}
	Physical layer security (PLS) has received considerable attention due to its ability to fortify secrecy in wireless links.
	A point-to-point channel is assumed where a transmitter (\textbf{S}) sends confidential messages to a legitimate receiver (\textbf{D}).
	This system operates in the presence of an eavesdropper (\textbf{E}).
{\color{black}According to \cite{Karagiannidis2015N},
it is assumed that nodes \textbf{S} and \textbf{D} are separated and surrounded by many stationary and moving
objects such that the signal transmitted by the source terminal
can propagate to the receiver only through electromagnetically
small apertures, i.e., keyholes among obstacles.
Each keyhole behaves like a source terminal to the next keyholes. Thus the resulting wireless channel can be modeled as a cascaded fading channel.
Moreover, it is also assumed due to high shadowing and path loss between
\textbf{S} and \textbf{D}, the direct link is highly attenuated, and thus it can be neglected. Finally, all links are subject to block i.n.i.d. $\mathcal{F}$ fading.} Next, the performance of the considered system is assessed in terms of the secrecy outage probability and the probability of non-zero secrecy capacity.
	
	The secrecy outage probability (SOP), is defined  as the probability that the instantaneous secrecy capacity falls below a target secrecy rate threshold, ${{R}_{th}}$ \cite{wyner1975wire,Bloch2008Wireless}.
	The secrecy capacity is given by \cite{wyner1975wire}
\begin{align}\label{21}
	C&=\max \left\{ {{C}_{D}}-{{C}_{E}},0 \right\}\notag\\
	& =\max \left\{ {{\log }_{2}}\left( 1+{{\gamma }_{D}} \right)-{{\log }_{2}}\left( 1+{{\gamma }_{E}} \right),0 \right\}\notag\\
	& =\left\{ \begin{matrix}
		{{\log }_{2}}\left( \frac{1+{{\gamma }_{D}}}{1+{{\gamma }_{E}}} \right),\quad \rm{if} \quad {{\gamma }_{D}}>{{\gamma }_{E}},  \\
		0,\quad \quad \quad\quad \quad\quad \rm{if}\quad{{\gamma }_{D}}\le {{\gamma }_{E}},  \\
	\end{matrix} \right.
\end{align}
	{\color{black} where ${\gamma }_{D}=\prod\limits_{{{\ell }_{D}}=1}^{{{L}_{D}}}{{{\gamma }_{D,{\ell }_{D}}}}$, ${\gamma }_{E}=\prod\limits_{{{\ell }_{E}}=1}^{{{L}_{E}}}{{{\gamma }_{E,{\ell }_{E}}}}$, ${{C}_{D}}$ and ${{C}_{E}}$ are the Shannon capacities of the main and wiretap channels.}
	Hence, SOP can be expressed as.
	{\begin{equation}
	{\rm{SOP}}\!=\!{{P}_{r}}\left\{ C\left( {{\gamma }_{D}},{{\gamma }_{E}} \right)<{{R}_{\text{th}}} \right\}\!=\!{{P}_{r}}\left\{ \frac{1\!+\!{{\gamma }_{D}}}{1\!+\!{{\gamma }_{E}}} \!<\!{{2}^{{{R}_{\text{th}}}}} \right\}.
	\end{equation}}
	A exact closed-form expression for the average SOP is, in general, difficult to be deduced in closed form. A lower bound for average SOP can be deduced as
\begin{equation}
	{\rm{SOP}}\ge {{P}_{r}}\left\{ \frac{{{\gamma }_{D}}}{{{\gamma }_{E}}}<{{2}^{{{R}_{\text{th}}}}}\right\}={{F}_{Y}}\left(\tau \right),
\end{equation}
	where $\tau \triangleq {{2}^{{{R}_{\text{th}}}}}$, {\color{black}$Y = {\gamma }_{D}/{\gamma }_{E}$} and ${{F}_{Y}}\left(\cdot \right)$ is given by \eqref{9}. As it will become evident, this bound is tight in the entire SNR region.
	
	Another fundamental performance metric of PLS is the probability of non-zero secrecy capacity (PNSC) which can be readily deduced as
\begin{align}
	{\rm{PNSC}} &= {P_r}\left\{ {{C}\left( {{\gamma _D},{\gamma _E}} \right) > 0} \right\} = {P_r}\left\{ {\left( {\frac{{1 + {\gamma _D}}}{{1 + {\gamma _E}}}} \right) > {1}} \right\}\notag\\
	& = 1 - {P_r}\left\{ {\frac{{{\gamma _D}}}{{{\gamma _E}}} \le 1} \right\} = 1 - {F_{{Y}}}\left( 1 \right).
\end{align}

	\subsubsection{Full-Duplex Relaying}
	Let us consider a two-hop full-duplex (FD) relaying network consisting of three nodes, namely one single-antenna source (\textbf{A}), one single-antenna destination (\textbf{B}) and one DF relay (\textbf{R}) equipped with one
	transmit antenna and one receive antenna to operate in full-duplex mode.

{\color{black}Throughout this analysis, it is assumed that \textbf{A}, \textbf{R} and \textbf{B} are far away from each other and surrounded by many obstacles so that the resulting communication channel can be considered as a cascaded fading channel. In addition, it is assumed that all links are subject to i.n.i.d. $\mathcal{F}$ fading.}
	By further assuming that the considered system employs the DF relaying protocol as well as an interference-limited scenario, the outage probability (OP) of the considered system can be formulated as \cite{Olivo2016An}
	
\begin{align}\label{pingle}
	{P_{\text{out}}}&= \Pr \left( {\min \left\{ {\frac{{{\gamma _{AR}}}}{{{\gamma _{RR}+1}}},{\gamma _{RB}}} \right\} < {2^R} - 1} \right)  \nonumber \\
	& \le {F_Y}\left( \sigma  \right) + {F_\gamma}\left( \sigma  \right) - {F_Y}\left( \sigma  \right){F_\gamma}\left( \sigma  \right),
\end{align}
	where ${\gamma }_{AR}=\prod\limits_{{{\ell }_{1}}=1}^{{{L}_{\text{1}}}}{{{\gamma }_{AR,{\ell }_{1}}}}$, ${\gamma }_{RB}=\prod\limits_{{{\ell }_{2}}=1}^{{{L}_{\text{2}}}}{{{\gamma }_{RB,{\ell }_{2}}}}$, $\sigma\triangleq{{2^R} - 1}$, 
	${{F}_{Y}}\left(\cdot \right)$ is the CDF of ${\gamma }_{AR}/{\gamma }_{RR}$ given by \eqref{9} and ${{F}_{\gamma}}\left(\cdot \right)$ is the CDF of $\gamma_{RB}$ given by \cite[eq. (20)]{8589124}.
	{\color{black}
	
\subsection{Asymptotic Analysis}
	Although the previously derived analytical results have been obtained in closed-form, they provide little insight as to the factors affecting system performance. As such, an asymptotic performance analysis that becomes tight at high SNR values will be presented next. Since our newly derived formulas for the SOP, PNSC and the OP of FD relaying system require the CDF of ratio of products of $\mathcal{F}$ RVs, it suffices to derive an asymptotic expression for $F_{Y}(\sigma)$, where ${{F}_{Y}}\left(\cdot \right)$ is given by \eqref{9} and $\sigma$ is a fixed value.
	
 Employing \cite[eq. (9.301)]{gradshteyn2007}, \eqref{9} can be expressed in terms of a Mellin-Barnes integral as
	\begin{align}
{F_Y}\left( \sigma  \right) =&\frac{1}{{{B_1}{B_2}}}\frac{1}{{2\pi j}}\int_{\cal L} {\frac{{\Gamma \left( -s \right)}}{{\Gamma \left( {1 - s} \right)}}\prod\limits_{{\ell _1}{=1}}^{{L_1}} {\Gamma \left( {{m_{1,{\ell _1}}} + s} \right)} } \notag\\
&\times\prod\limits_{{\ell _2}{=1}}^{{L_2}} {\Gamma \left( {{m_{2,{s_{{\ell _{2}}}}}} + s} \right)} \prod\limits_{{\ell _2}{=1}}^{{L_2}} {\Gamma \left( {{m_{2,{\ell _2}}} - s} \right)} \notag\\
&\times\prod\limits_{{\ell _1}{=1}}^{{L_1}} {\Gamma \left( {{m_{1,{s_{{\ell _1}}}}} - s} \right)} {\left( {\frac{{{{\bar \gamma }_1}{A_2}}}{{x{{\bar \gamma }_2}{A_1}}}} \right)^{ s}}ds\notag\\
=&\frac{1}{{{B_1}{B_2}}}\frac{1}{{2\pi j}}\int_{\cal L} {v(s) ds},
	\end{align}
where $\mathcal{L}$ is the Mellin Barnes contour. When ${{\bar \gamma }_1} \to \infty$, this integral can be approximated by employing \cite[Theorem 1.7]{kilbas2004h} as.
	\begin{align}
{F_Y}\left( \sigma  \right) \approx &\left\{\!\!\! \begin{array}{l}
\frac{1}{{{B_1}{B_2}}}{\rm{Res}}\left[ {v\left( s \right), - {m_{1,k}}} \right],\quad{\rm if}\quad\!\!{m_{1,k}} > {m_{2,{s_k}}}\\
\frac{1}{{{B_1}{B_2}}}{\rm{Res}}\left[ {v\left( s \right), - {m_{2,{s_k}}}} \right],\!\!\quad{\rm if}\quad\!\!{m_{1,k}} < {m_{2,{s_k}}}
\end{array} \right.\notag\\
 =& \left\{\!\!\! \begin{array}{l}
\frac{1}{{{B_1}{B_2}}}\frac{1}{{{m_{1,k}}}}\prod\limits_{{\ell _1} = 1,{\ell _1} \ne k}^{{L_1}} {\Gamma \left( {{m_{1,{\ell _1}}} - {m_{1,k}}} \right)} \\
\times \prod\limits_{{\ell _2} = 1}^{{L_2}} {\Gamma \left( {{m_{2,{s_{{\ell _2}}}}} - {m_{1,k}}} \right)} \prod\limits_{{\ell _2} = 1}^{{L_2}} {\Gamma \left( {{m_{2,{\ell _2}}} + {m_{1,k}}} \right)} \\
\times \prod\limits_{{\ell _1} = 1}^{{L_1}} {\Gamma \left( {{m_{1,{s_{{\ell _1}}}}} + {m_{1,k}}} \right)} {\left( {\frac{{{{\bar \gamma }_1}{A_2}}}{{x{{\bar \gamma }_2}{A_1}}}} \right)^{ - {m_{1,k}}}},\\
\qquad\qquad\qquad\qquad\qquad{\rm if}\quad\!\!{m_{1,k}} < {m_{2,{s_k}}}\\
\frac{1}{{{B_1}{B_2}}}\frac{1}{{{m_{2,{s_k}}}}}\prod\limits_{{\ell _1} = 1,{\ell _1} \ne k}^{{L_1}} {\Gamma \left( {{m_{1,{\ell _1}}} - {m_{2,{s_k}}}} \right)} \\
\times\!\prod\limits_{{\ell _2} = 1}^{{L_2}} {\Gamma \left( {{m_{2,{s_{{\ell _2}}}}} - {m_{2,{s_k}}}} \right)} \prod\limits_{{\ell _2} = 1}^{{L_2}} {\Gamma \left( {{m_{2,{\ell _2}}} + {m_{2,{s_k}}}} \right)} \\
\times \prod\limits_{{\ell _1} = 1}^{{L_1}} {\Gamma \left( {{m_{1,{s_{{\ell _1}}}}} + {m_{2,{s_k}}}} \right)} {\left( {\frac{{{{\bar \gamma }_1}{A_2}}}{{x{{\bar \gamma }_2}{A_1}}}} \right)^{ - {m_{2,{s_k}}}}},\\
\qquad\qquad\qquad\qquad\qquad{\rm if}\quad\!\!{m_{1,k}} < {m_{2,{s_k}}}
\end{array} \right.
	\end{align}
	where ${m_{1,k}} = \min \left\{ {{m_{1,{\ell _1}}}} \right\}$ (${\ell _1} = 1, \cdots ,{L_1}$) and ${m_{2,{s_k}}} = \min \left\{ {{m_{2,{s_{{\ell _2}}}}}} \right\}$ (${\ell _2} = 1,2, \cdots ,{L_2}$).
	
	Finally, it is worth pointing out that $\text{PNSC} \to 1$ and $\text{OP} \to {F_\gamma}\left( \sigma  \right)$ because ${F_Y}\left( \sigma  \right) \to 0$ in the high-SNR regime.
	
	}
	\section{Numerical Results}\label{Sec:Results}
	In this section, some numerical results are presented to illustrate the proposed analytical framework. All numerical results are substantiated by semi-analytical Monte Carlo simulations.
							
{\color{black}Figures \ref{fig1} and \ref{fig2} depict the PDF and the CDF of the ratio of products of $\mathcal{F}$-distributed RVs, respectively, and their proposed log-normal approximation for different values of fading parameters with $m_{1,s_{\ell_1}}=m_{1,s}$ $(\ell_1=1, \cdots, L_1)$, $m_{2,s_{\ell_2}}=m_{2,s}$ $(\ell_2=1, \cdots, L_2)$, $m_{1,{\ell_1}}=m_{1}$ and $m_{2,{\ell_2}}=m_{2}$.} As it can be observed, analytical results perfectly match Monte Carlo simulations for all considered test cases, which validate the derived analytic expressions. Furthermore, it can be observed that the approximate PDFs and CDFs, obtained using the log-normal distribution, are practically indistinguishable to the exact ones.
\begin{figure}[t]
	\centering
	\includegraphics[scale=0.65]{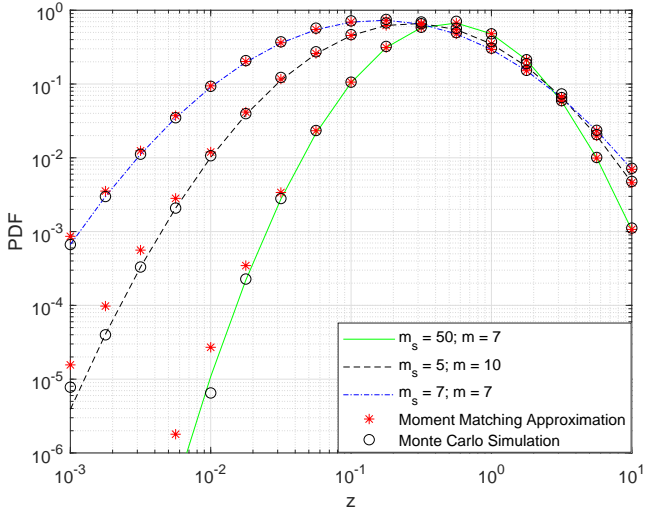}
	\caption{{\color{black}PDF of the ratio of products of squared $\mathcal{F}$-distributed RVs with $m_{s_1}=m_{s_2}=m_{s}$, $m_{1}=m_{2}=m$, $L_1=L_2=2$, $\gamma_1=\gamma_2=1$ ${\rm dB}$.}}
	\label{fig1}
\end{figure}				
\begin{figure}[t]
	\centering
	\includegraphics[scale=0.65]{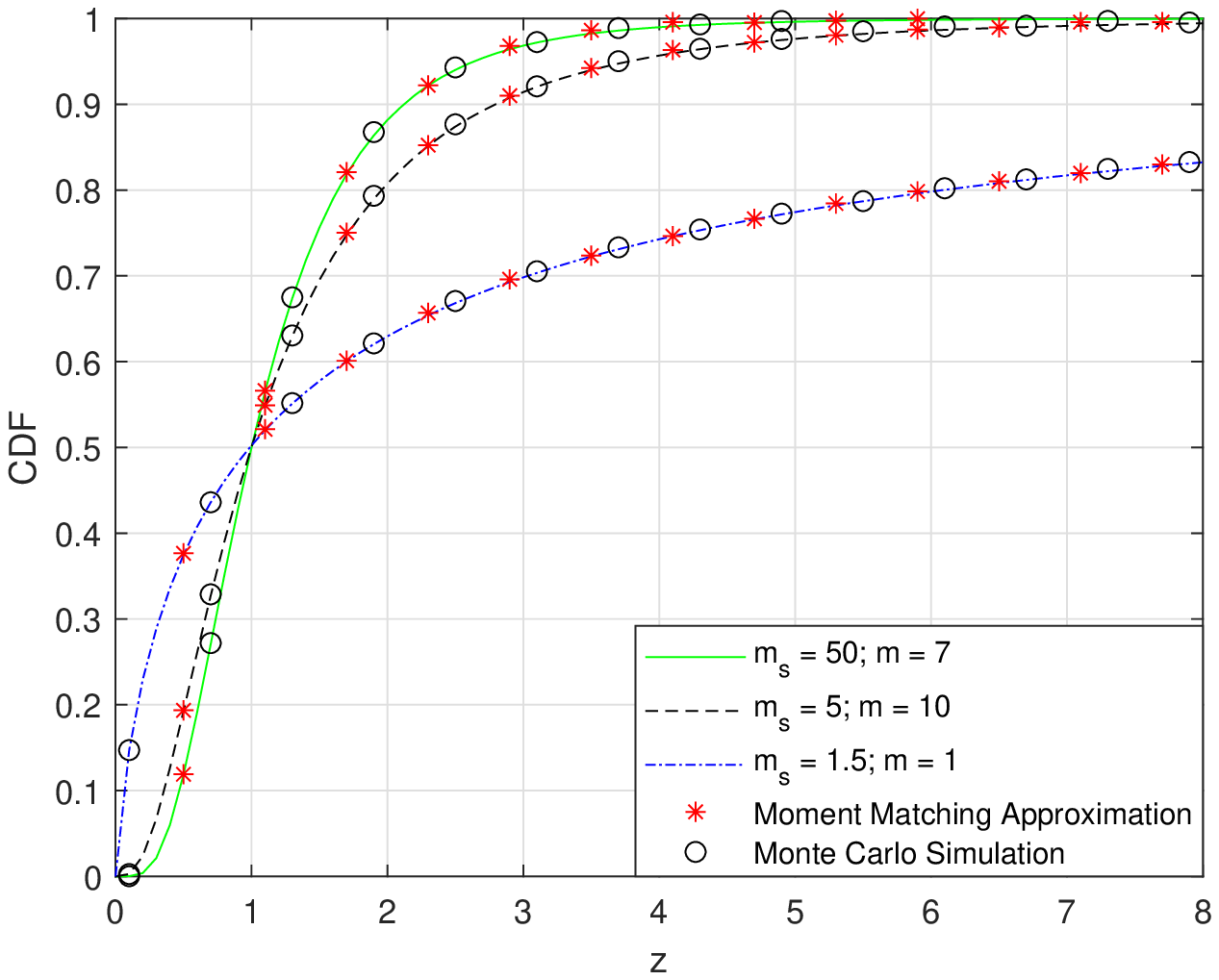}
	\caption{{\color{black}CDF of the ratio of products of squared $\mathcal{F}$-distributed RVs with $m_{1,s}=m_{2,s}=m_{s}$, $m_{1}=m_{2}=m$, $L_1=L_2=2$, $\gamma_1=\gamma_2=1$ ${\rm dB}$.}}
	\label{fig2}
\end{figure}
\begin{figure}[t]
	\centering
	\includegraphics[scale=0.65]{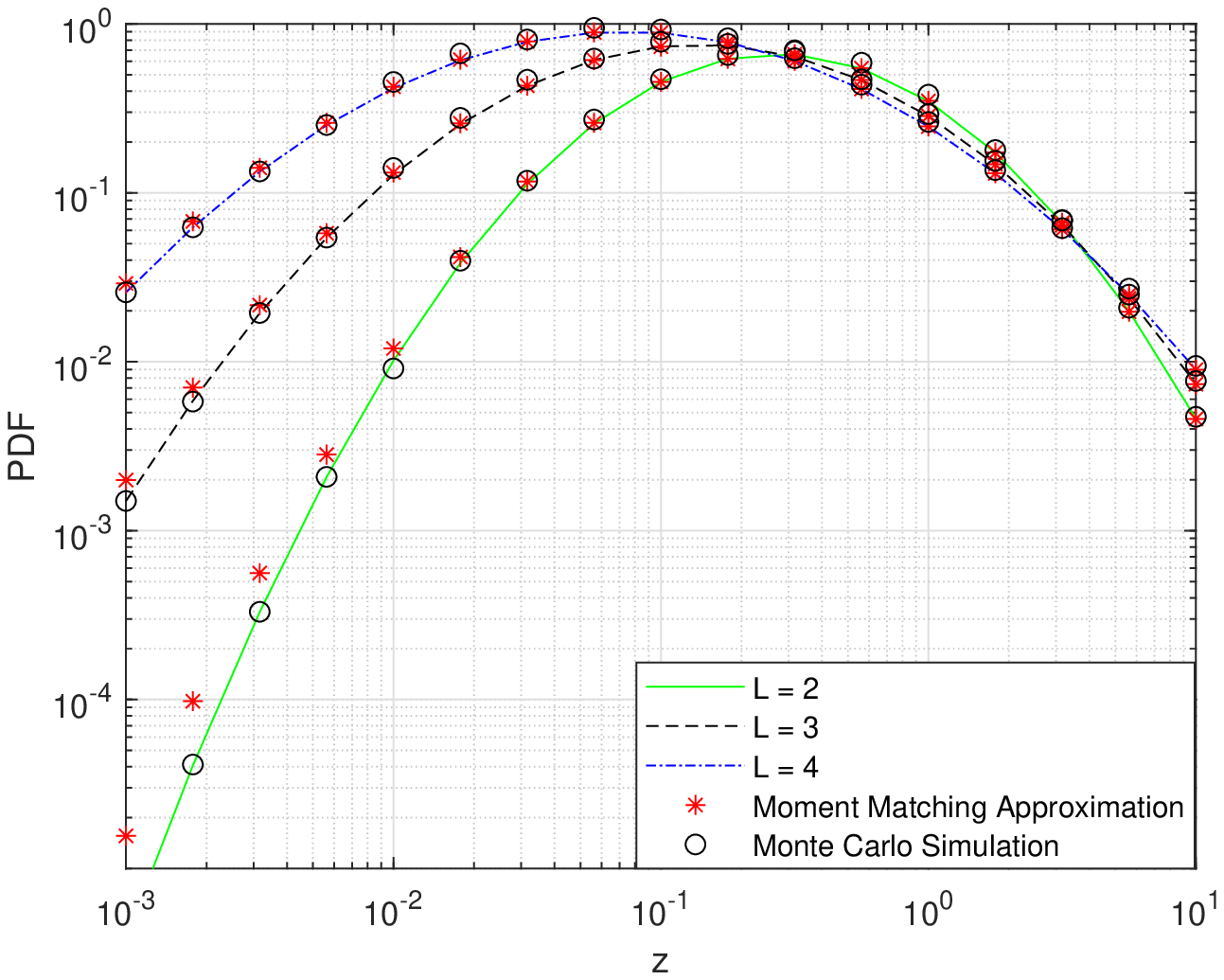}
	\caption{{\color{black}PDF of the ratio of products of squared $\mathcal{F}$-distributed RVs with $L_1=L_2=L$, $m_1=m_2=5$, $m_{1,s}=m_{2,s}=10$, $\gamma_1=\gamma_2=1$ ${\rm dB}$.}}
	\label{fig3}
\end{figure}				
\begin{figure}[t]
	\centering
	\includegraphics[scale=0.65]{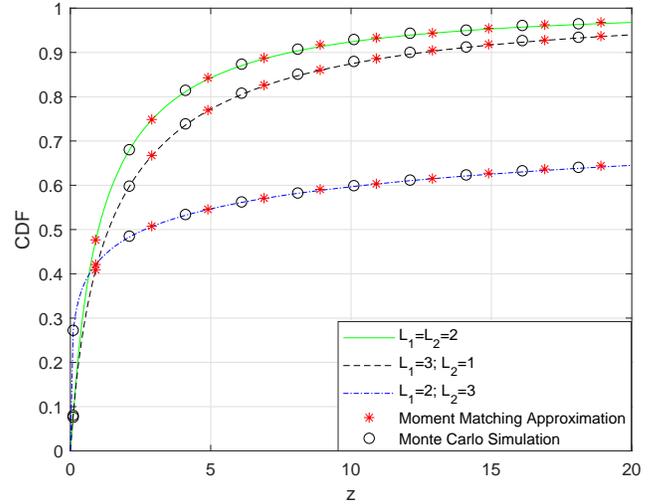}
	\caption{{\color{black}CDF of the ratio of products of squared $\mathcal{F}$-distributed RVs with $m_1=m_2=5$, $m_{1,s}=m_{2,s}=10$, $\gamma_1=\gamma_2=1$ ${\rm dB}$.}}
	\label{fig4}
\end{figure}

	%
		
	{\color{black}Figures \ref{fig3} and \ref{fig4} depict the PDF and the CDF of the ratio of products of squared $\mathcal{F}$-distributed RVs, respectively, as well as their log-normal approximation for different values of $L$ with $m_{1,s_{\ell_1}}=m_{1,s}$ $(\ell_1=1, \cdots, L_1)$, $m_{2,s_{\ell_2}}=m_{2,s}$ $(\ell_2=1, \cdots, L_2)$, $m_{1,{\ell_1}}=m_{1}$ and $m_{2,{\ell_2}}=m_{2}$}. Again, a perfect agreement between analytical and simulation results is observed, which confirms the validity of the proposed mathematical analysis. Furthermore, the approximate PDFs and CDFs obtained using the log-normal distribution  closely match the exact analytical ones.

		\begin{figure}[t]
			\centering
			\includegraphics[scale=0.65]{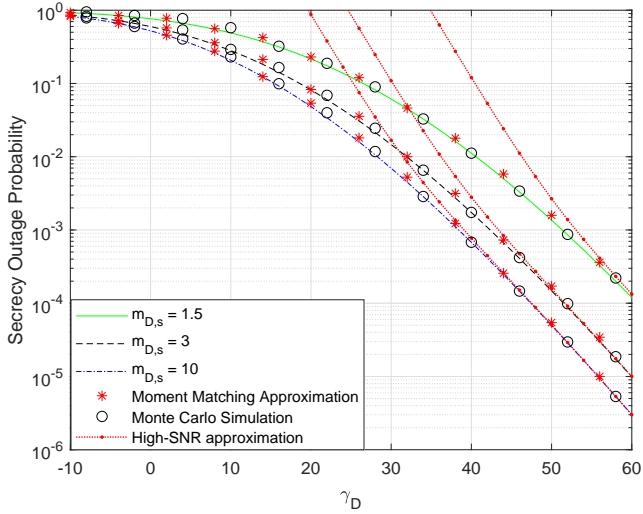}
			\caption{{\color{black}Secrecy outage probability versus $\gamma_D$ for different values of $m_{D,s}$ with $R_{\rm th}=1$, $L_D=L_E=3$, $m_D=m_E=6$, $m_{E,s}=3$ and $\gamma_E=0$ ${\rm dB}$.}}
			\label{fig5}
		\end{figure}
		\begin{figure}[t]
			\centering
			\includegraphics[scale=0.65]{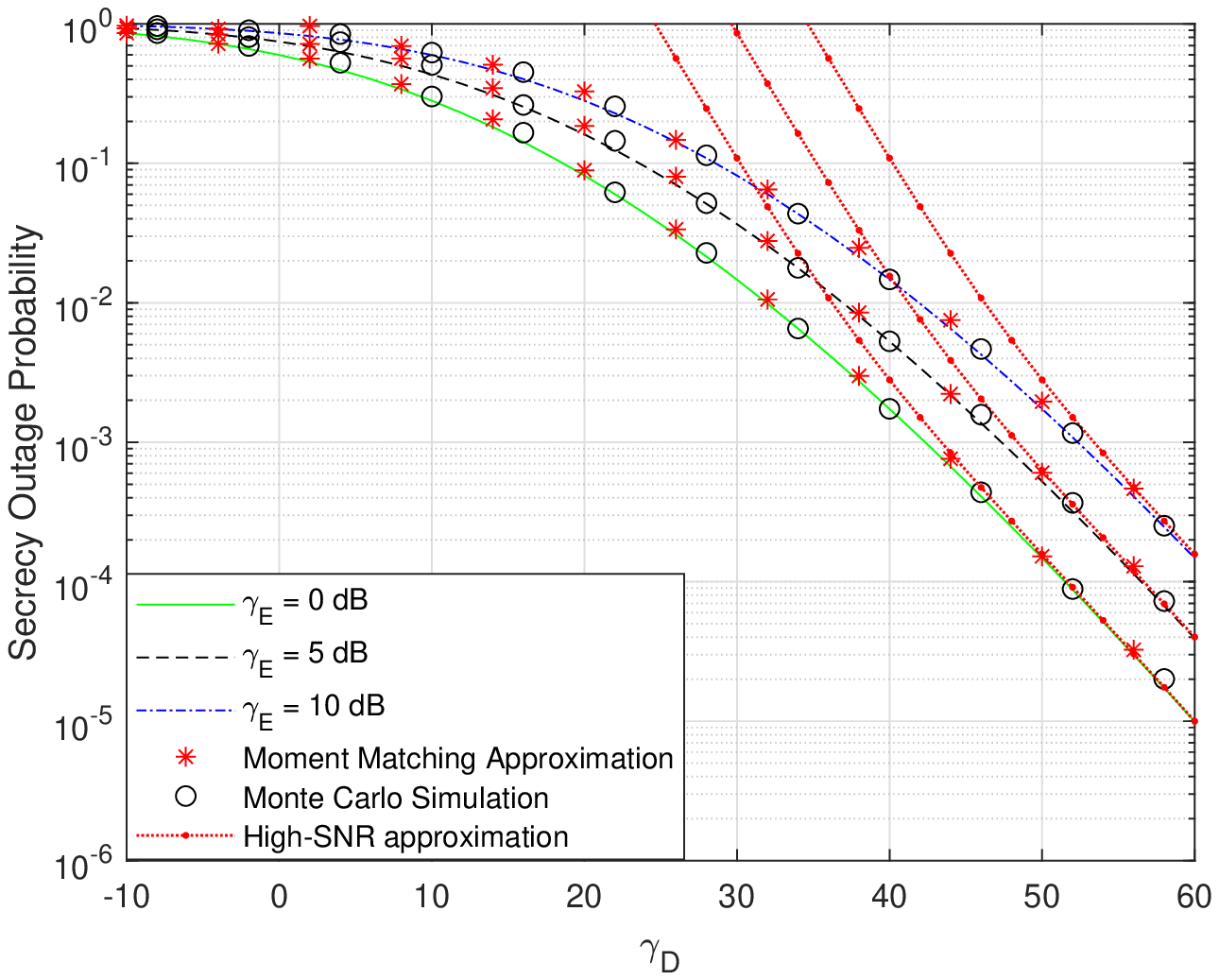}
			\caption{{\color{black}Secrecy outage probability versus $\gamma_D$ for different values of $\gamma_{E}$ with $R_{\rm th}=1$, $L_D=L_E=3$, $m_D=m_E=6$, $m_{D,s}=3$ and $m_{E,s}=3$.}}
			\label{fig6}
		\end{figure}
		\begin{figure}[t]
			\centering
			\includegraphics[scale=0.65]{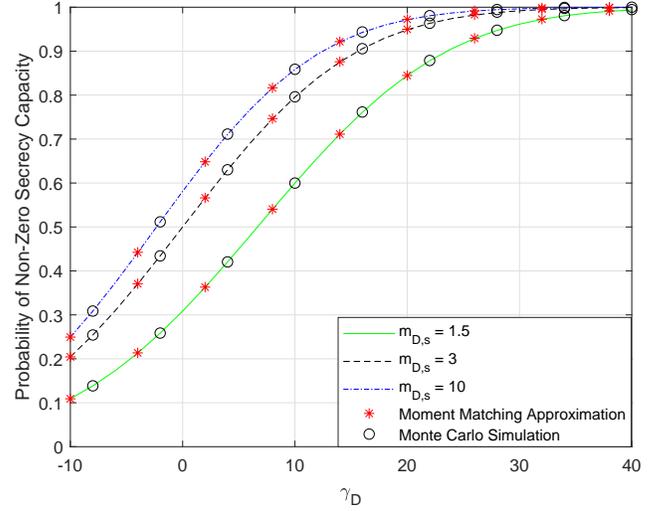}
			\caption{{\color{black}Probability of Non-Zero Secrecy Capacity versus $\gamma_D$ for different values of $m_{D,s}$ with $R_{\rm th}=1$, $L_D=L_E=3$, $m_D=m_E=6$, $m_{E,s}=3$ and  $\gamma_E=0$ ${\rm dB}$.}}
			\label{fig7}
		\end{figure}
		\setlength{\textfloatsep}{6pt}
		\begin{figure}[!t]
			\centering
			\includegraphics[scale=0.65]{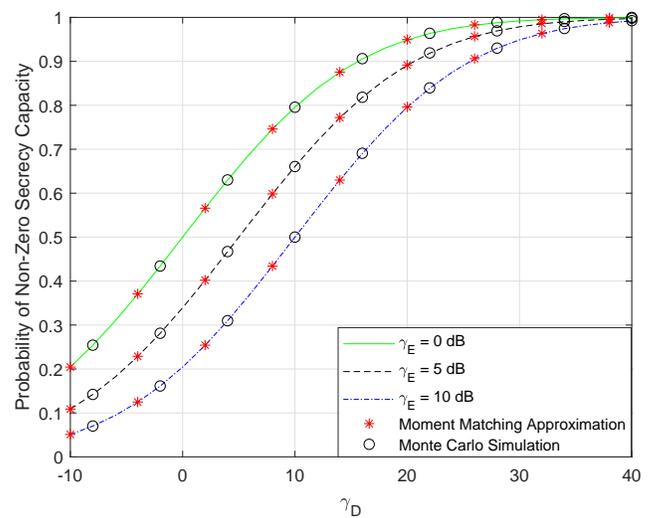}
			\caption{{\color{black}Probability of Non-Zero Secrecy Capacity versus $\gamma_D$ for different values of $\gamma_{E}$ with $R_{\rm th}=1$, $L_D=L_E=3$, $m_D=m_E=6$ and $m_{D,s}=m_{E,s}=3$.}}
			\label{fig8}
		\end{figure}
	{\color{black}Figures \ref{fig5} and \ref{fig6} depict the SOP performance of a secure wireless system operating in the presence of $\mathcal{F}$-distributed fading channels as a function of the average legitimate SNR, $\gamma_D$, with $m_{D,s_{\ell_D}}=m_{D,s}$ $(\ell_D=1, \cdots, L_D)$, $m_{E,s_{\ell_E}}=m_{E,s}$ $(\ell_E=1, \cdots, L_2)$, $m_{D,{\ell_D}}=m_{D}$ and $m_{E,{\ell_E}}=m_{E}$. Specifically, in Fig. \ref{fig5}, different values of $m_{D,s}$ have been considered and, in Fig. \ref{fig6}, different values of the average eavesdropper SNR, $\gamma_E$, have been considered.} As it is evident, the proposed lower bound is very tight and remarkably close to the exact SOP obtained by Monte Carlo simulations in the entire SNR region. Furthermore, in the same figures, the approximate SOP has been computed by using the proposed moment matching method. It is noted that in order to obtain a the best match to the exact solution, an appropriate adjustment factor for each SNR value has been taken into account, as described in Section \ref{Sec:APPROXIMATION}. {\color{black}In addition, the asymptotic expressions match well the exact ones in the high-SNR regime thus proving their validity and versatility.} Finally, it can be observed that secrecy performance deteriorates as $m_{D,s}$ decreases and $\gamma_E$ increases. This fact implies that fading conditions can be exploited to prevent the information from being overheard by an eavesdropper.

	Under the same fading conditions, Figs. \ref{fig7} and \ref{fig8} depict PNSC as a function of $\gamma_D$ for selected values of $m_{D,s}$ and $\gamma_E$, respectively. {Again, it is evident that the approximate results are very close to the exact ones and Monte Carlo simulations.} Moreover, it can be observed that larger values of $m_{D,s}$ assures secure transmission with a higher probability. This because the channel state randomness (fading) can be exploited to enhance the security.

	{\color{black}Figures \ref{fig9}-\ref{fig11} depict the outage performance of a FD relaying network as a function of the first hop average SNR, $\gamma_{AR}$, for $m_{AR,\ell_1}=m_{AR}$ $(\ell_1=1, \cdots, L_1)$, $m_{AR,s_{\ell_1}}=m_{AR,s}$, $m_{RB,\ell_2}=m_{RB}$ $(\ell_2=1, \cdots, L_2)$, $m_{RB,s_{\ell_2}}=m_{RB,s}$ and $\sigma = 1$.} For each test case, approximate results using the log-normal distribution are also depicted. Again, the adjustment factor with the best matching performance to the exact outage performance has been introduced, which is also dependent on the operating SNR. As it can be observed, analytical and approximate results agree well with Monte Carlo simulations thus confirming the correctness of the proposed analysis.
	In Fig. \ref{fig9}, different values of $m_{AR,s}$ have been considered. From the observation of Fig. \ref{fig9}, it becomes evident that the OP increases as $m_{AR,s}$ decreases, {\color{black}i.e.,} when shadowing becomes more severe. The impact of $m_{AR,s}$ on OP performance becomes more significant for large values of $\gamma_{AR}$.
	In Fig. \ref{fig10}, OP is depicted for different values of $\gamma_{RB}$. It is evident that OP increases as this $\gamma_{RB}$ decreases.
	Finally, in Fig. \ref{fig11}, OP is depicted for different values of the residual self-interference, $\gamma_{RR}$, and as it can be observed, $\gamma_{RR}$ has a detrimental impact on OP.	Thus, to exploit the potential benefits of FD relaying, self-interference mitigation matters, especially for medium-to-high SNRs. Further guidelines for self-interference mitigation on FD relay systems are available in \cite{shin2017relay,fukuzono2017self}.	{\color{black}Finally, from the observation of Figs. \ref{fig9}, \ref{fig10} and \ref{fig11} it is obvious that for large SNR values OP asymptotically becomes a constant. This constant value depends on the value of the second hop average SNR, $\gamma_{RB}$. The reason behind this behavior in the high-SNR regime is that large values of $\gamma_{AR}$ do not significantly affect ${F_Y}\left( \sigma  \right)$, as is also evident by observing eq. \eqref{pingle}. Therefore, for large SNR values, \eqref{pingle} and henceforth OP is mainly determined by $\gamma_{RB}$.}
	\begin{figure}[t]
	\centering
	\includegraphics[scale=0.65]{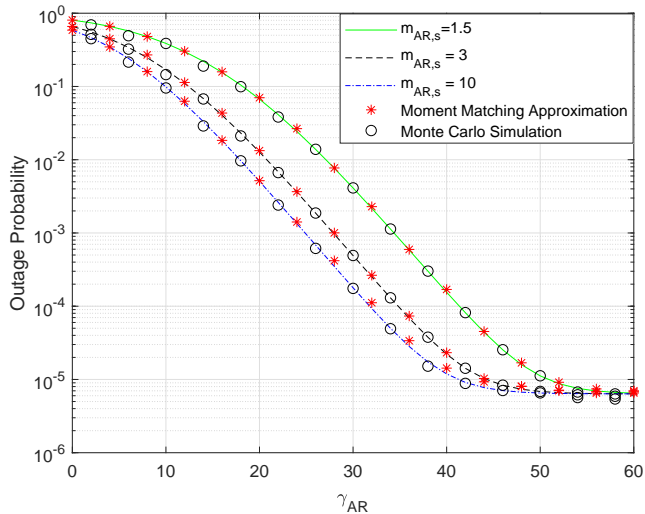}
	\caption{{\color{black}Outage probability of FD relaying network versus the average SNR of the first hop $\gamma_{AR}$ for different values of $m_{AR,s}$ with $L_1 = L_2 = 2$, $m_{AR}=6$, $m_{RR}=m_{RR,s}=1.5$, $\gamma_{RR}=5$ ${\rm dB}$, $m_{RB}=8$, $m_{RB,s}=10$, $\gamma_{RB}=15$ ${\rm dB}$.}}
	\label{fig9}
	\end{figure}
	
	\begin{figure}[t]
	\centering
	\includegraphics[scale=0.65]{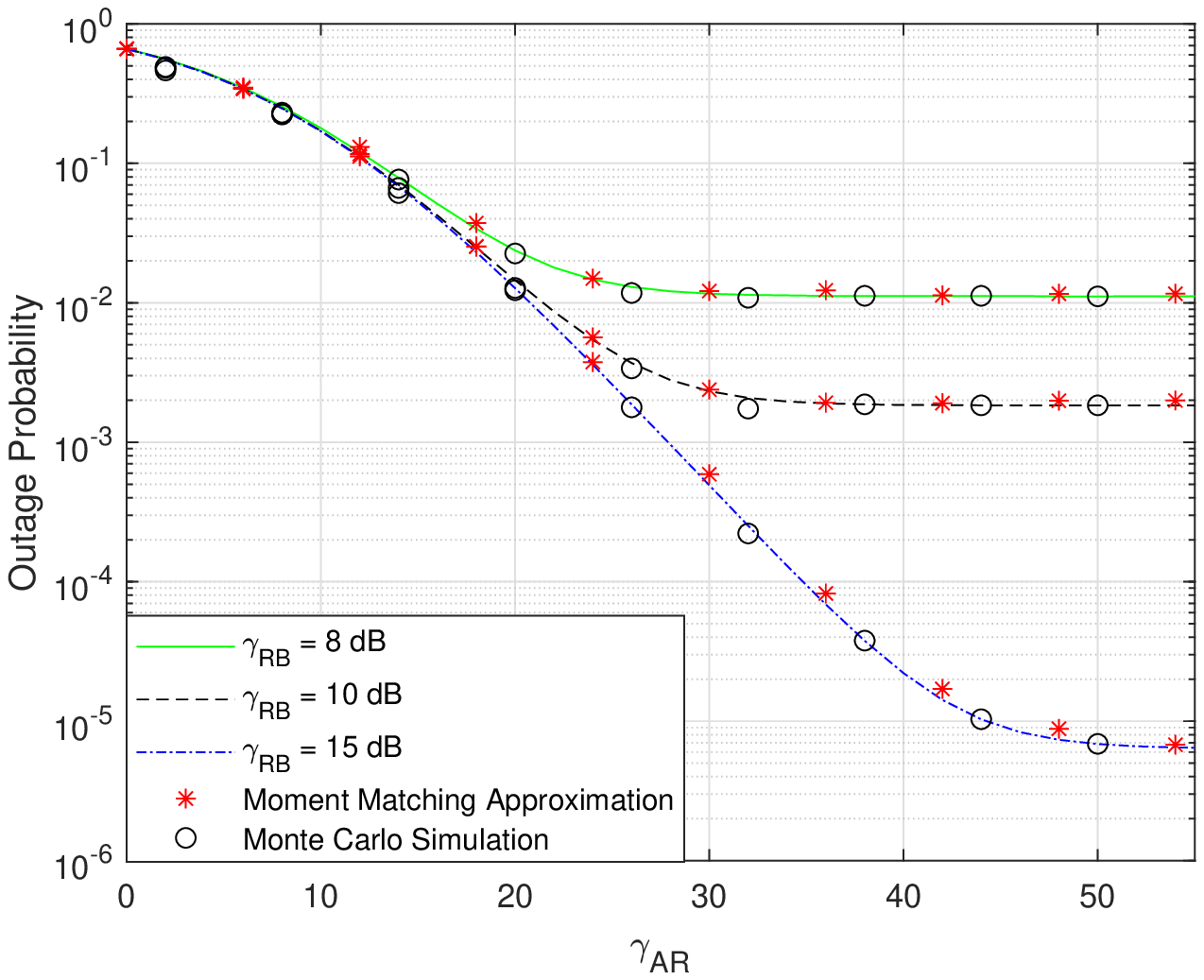}
	\caption{{\color{black}Outage probability of FD relaying network versus the average SNR of the first hop $\gamma_{AR}$ for different values of $\gamma_{RD}$ with $L_1 = L_2 = 2$, $m_{AR}=6$, $m_{RR}=m_{RR,s}=1.5$, $\gamma_{RR}=5$ ${\rm dB}$, $m_{{AR},s}=3$, $m_{RB,s}=10$, $\gamma_{RB}=15$ ${\rm dB}$.}}
	\label{fig10}
	\end{figure}
	
	\begin{figure}[t]
	\centering
	\includegraphics[scale=0.65]{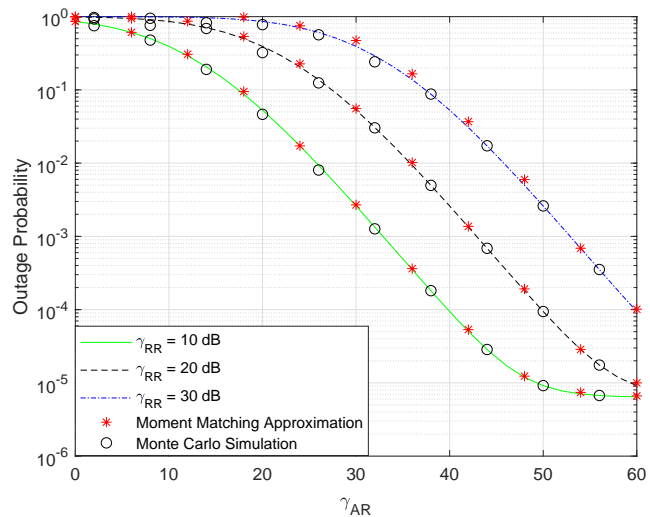}
	\caption{{\color{black}Outage probability of DF scheme versus the average SNR of the first hop $\gamma_{AR}$ for different values of $\gamma_{RR}$ with $L_1 = L_2 = 2$, $m_{AR}=6$, $m_{RR}=m_{RR,s}=1.5$, $\gamma_{RB}=15$ ${\rm dB}$, $m_{{AR},s}=3$, $m_{RB,s}=10$, $\gamma_{RB}=15$ ${\rm dB}$.}}
	\label{fig11}
	\end{figure}
	
	\section{Conclusions}\label{Sec:Conclusions}
	In this paper, exact as well as approximate closed-from expressions for the PDF, the CDF and the MGF of the ratio of products of Fisher-Snedecor $\mathcal{F}$ RVs have been derived. 
	The log-normal distribution has been selected to accurately approximate the exact PDFs and CDFs by using the moment-matching method and a properly selected adjustment factor. It has been shown that the proposed log-normal approximation is very accurate, as it yields results that are very close to the ones obtained using the exact distributions.
	The presented mathematical analysis is useful for assessing the performance of practical wireless communication systems operating in the presence of composite multipath fading and shadowing, including secure wireless communication systems and full-duplex relaying systems.
	{\color{black}The proposed framework may also be used in the study of several areas of wireless communications such as spectrum sharing, interference-limited scenarios, high resolution synthetic aperture radar clutter and multihop systems.}
	Finally, numerical results and Monte Carlo simulations have been presented to validate the proposed analysis and an excellent match has been observed.

	\bibliographystyle{IEEEtran}
	\bibliography{IEEEabrv,Ref}

	\end{document}